\setlist{noitemsep}
\newcommand{\Z}{\mathsf{Z}}
\newcommand{\CZ}{\mathsf{CZ}}
\newcommand{\Correct}{\mathsf{Correct}}
\newcommand{\Redo}{\mathsf{Redo}}
\newcommand{\Abort}{\mathsf{Abort}}
\newcommand{\ok}{\mathsf{Ok}}
\newcommand{\NPC}{\mathsf{NP\!\! - \!\! Complete}}
\newcommand{\Id}{\mathbb{I}}
\newcommand{\pmax}{p_{\mathit{max}}}
\newcommand{\pmin}{p_{\mathit{min}}}
\newcommand{\fv}{\mathtt v}
\newcommand{\fV}{\mathtt V}
\newcommand{\fw}{\mathtt w}
\newcommand{\fW}{\mathtt W}
\newcommand{\ft}{\mathtt t}
\newcommand{\fT}{\mathtt T}
\newcommand{\fo}{\mathtt o}
\newcommand{\fO}{\mathtt O}
\newtheorem{theorem}{Theorem}
\newtheorem{lemma}{Lemma}
\newtheorem{corollary}{Corollary}
\newtheorem{definition}{Definition}
\begin{document}

\date{\today}
\title{Securing Quantum Computations in the NISQ Era}
\author{Elham Kashefi}
\affiliation{School of Informatics, University of Edinburgh, 10 Crichton Street, Edinburgh EH8 9AB, United Kingdom}
\affiliation{Laboratoire d’Informatique de Paris 6, CNRS, Sorbonne Université, 4 Place Jussieu, 75005 Paris, France}
\author{Dominik Leichtle}
\affiliation{Laboratoire d’Informatique de Paris 6, CNRS, Sorbonne Université, 4 Place Jussieu, 75005 Paris, France}
\author{Luka Music}
\affiliation{Laboratoire d’Informatique de Paris 6, CNRS, Sorbonne Université, 4 Place Jussieu, 75005 Paris, France}
\author{Harold Ollivier}
\affiliation{Laboratoire d’Informatique de Paris 6, CNRS, Sorbonne Université, 4 Place Jussieu, 75005 Paris, France}

\begin{abstract}
Recent experimental achievements motivate an ever-growing interest from companies starting to feel the limitations of classical computing. Yet, in light of ongoing privacy scandals, the future availability of quantum computing through remotely accessible servers pose peculiar challenges: Clients with quantum-limited capabilities want their data and algorithms to remain hidden, while being able to verify that their computations are performed correctly. Research in blind and verifiable delegation of quantum computing attempts to address this question. However, available techniques suffer not only from high overheads but also from over-sensitivity: When running on noisy devices, imperfections trigger the same detection mechanisms as malicious attacks, resulting in perpetually aborted computations. Hence, while malicious quantum computers are rendered harmless by blind and verifiable protocols, inherent noise severely limits their usability.

We address this problem with an efficient, robust, blind, verifiable scheme to delegate deterministic quantum computations with classical inputs and outputs. We show that: 1) a malicious Server can cheat at most with an exponentially small success probability; 2) in case of sufficiently small noise, the protocol succeeds with a probability exponentially close to 1; 3) the overhead is barely a polynomial number of repetitions of the initial computation interleaved with test runs requiring the same physical resources in terms of memory and gates; 4) the amount of tolerable noise, measured by the probability of failing a test run, can be as high as 25\% for some computations and will be generally bounded by 12.5\% when using a planar graph resource state. The key points are that security can be provided without universal computation graphs and that, in our setting, full fault-tolerance is not needed to amplify the confidence level exponentially close to 1.
\end{abstract}

\maketitle

\section{Introduction}
\label{sec:intro}
Quantum computing promises unparalleled power for solving certain problems such as database search~\cite{grover} or integer factoring~\cite{shor}.
Recent experimental progress showed that the limit of classical un-simulatability is now within reach, if not already surpassed~\cite{google_supremacy}.
In this regime, quantum computers become so powerful that their classical counterparts cannot simulate their computation in a reasonable time.

On the one hand, this has triggered a lot of interest from all stakeholders starting to feel the limitations of classical computing power and wanting to prepare for the inevitable slow-down of Moore's law, i.e.~from academic labs all the way to industry users.
This, in turn, has driven most recent algorithmic and software developments in the field. More and more use cases are being studied with the goal of running useful computations on these devices as soon as their capabilities allow it.

On the other hand, because the cloud is emerging as the preferred way of accessing quantum computing capabilities, the questions of data and algorithm confidentiality as well as computation integrity are taking a particularly acute importance.
First, it is expected that only the most crucial and strategic computations will be run on these quantum computers, thus making these systems ideal targets for sophisticated hacking.
Second, disruption caused by (un)intentional mis-computations could stay undetected in absence of means to check the result.
In case of detection, it could still be difficult to pin-point the failing component due to the impossibility of following exactly the progress of quantum computations.
Several methods for eschewing this hurdle have been devised in the past (see e.g.~\cite{bfk, fk} and~\cite{GKK19} for a review).

In spite of the answers brought by the existence of these protocols, the initial questions are far from being resolved, even from a theoretical standpoint.
This is because currently known verifiable protocols are too sensitive to be of practical use.
Indeed, they have been developed for noiseless devices and have been optimized to detect the smallest fiddling and abort quickly.
Unfortunately, replacing perfect devices by even slightly noisy ones is not an option: the verification procedure would keep aborting, mistakenly thinking that plain imperfections are in fact the signature of malicious behaviour.

Several options for dealing with this sensitivity have been discussed in the past.
Previous research explored giving up blindness~\cite{GHK18}, imposing restrictions on the noise model~\cite{KD19}, moving towards the double server setting with classical clients and no classical communication between the servers~\cite{MF13a}, or finally going to the computationally-secure regime~\cite{Urm18}.
Yet, to obtain exponential security, these works would still require fully fault-tolerant computations, thus making them impractical for NISQ era devices.

As a consequence, before the availability of large fault-tolerant machines, clients could do their best to mitigate the effect of noise on their computations but, regarding security,  would still be left with no better alternative than to either give up their objectives entirely or try to convince themselves that providers are not as malicious as they could be. 
By running computations whose outcomes are known in advance, they could benchmark the performance and quality of the devices used (see e.g.~\cite{KLRB08a, WBDA+19a, MPJM19a}), and ultimately decide to trust or not future runs of the service.
Yet, such strategy falls short of the security expectations of most users because no benchmark entails future fulfilment of the provider's promises: benchmarking is not verifying.
It might only serve \textit{a posteriori} to demonstrate that the provider cheated during the benchmark, but would not help preventing a deviant behaviour at the time of computation.

In this paper, we propose a solution to both error mitigation and security issues by introducing a verifiable, blind and delegated quantum computing protocol for deterministic computations with classical inputs and outputs, that is also robust to noise (Section~\ref{sec:prot}).
It relies on the Measurement-Based model of Quantum Computation (MBQC) as it is the most natural one for delegation (see Section~\ref{sec:mbqc} for a short introduction to MBQC and its blind and trappified versions).
The robustness of the protocol means that there is no need to give up the ambition to provide security in a fully malicious adversarial model because of noise.
More precisely, we show in Section~\ref{sec:corr_bl} that, without any restriction on the adversary, any attempt at disturbing the computation will either be caught or error-corrected and made ineffective with high probability.
Hence, in the presence of noise that can be error-corrected by our scheme, the computation will be accepted with probability exponentially close to $1$.

To do this, we will repeat the computation several times while interleaving these executions with test runs aimed at detecting a dishonest behaviour of the Server.
The interplay between computations and tests will turn in our favour as it will both offer noise-robustness and exponential security.
The robustness will come from the error-correcting capability offered by majority voting, while the exponential security results from the blindness of the scheme which forces the Server to attack at least half the runs to have a chance to corrupt the computation. This in turn necessarily heightens its chances of getting caught.

We want to stress here that we make {\em no} assumption on the adversary in the process.
The adversary can be as malicious as it wants and the security will not be compromised.
Indeed, the security is information theoretic and provided in the composable framework of Abstract Cryptography (see~\cite{MR11,DFPR14} for an introduction).
It ensures it will not be jeopardised by subsequent or simultaneous instantiations in conjunction with other protocols.
Any noise is treated as an adversarial deviation, which is detectable by our cryptographic tool.
This is a fundamental property that cannot be obtained via any other certification approach~\cite{EHW+20} without using a masking scheme providing blindness. 

The key technical element used in this work is the recognition that, for deterministic classical-input classical-output computations, fault-tolerance is not required to error-correct the lightest attacks, thus allowing to boost the probability of acceptance of the protocol for noisy devices while still rightfully aborting when the perceived disturbance would risk overwhelming the classical error-correcting capabilities of the scheme.
The practical implication is the absence of overhead for each run used in the protocol when compared to the same non-robust, non-verifiable, blind delegated quantum computation in the MBQC model.
In fact, the only overhead of our scheme is to require the repetition of computations similar to the unprotected one (i.e.~same size, connectivities and gate set) a polynomial number of times in order to get not only exponential security, but also exponential acceptance when the noise is not too strong.
In particular, it does not increase the quantum memory requirement nor require additional simultaneous entanglement between quantum systems, thus being ideal for near-term implementations where each run could be carried out either sequentially on a single machine or in parallel using several ones.

\section{Measurement-Based Quantum Computing}
\label{sec:mbqc}
This section provides a brief overview of various useful notions linked to our model of computation, namely the Measurement-Based Quantum Computing (MBQC) and its blind and verifiable variants. Based on the gate-teleportation principle, this model is equivalent to the circuit model and is a natural setup for considering delegated computations, i.e.~when a Client with limited quantum capabilities instructs a more powerful Server to perform a computation on his behalf. We assume familiarity with the main other concepts of quantum information (see e.g.~\cite{nielsenchuang} for details) and refer the reader to \cite{mbqc} for a more in-depth introduction to MBQC.

From now on, we use the following notations. $\Z_{\theta}$ denotes the operator $\begin{pmatrix} 1 & 0 \\ 0 & e^{i\theta} \end{pmatrix}$ for $\theta \in \Theta = \{k\pi/4 \mid 0 \leq k \leq 7\}$ while $\ket{+_{\theta}}$ is the state $\Z_{\theta}\ket{+} = \frac{1}{\sqrt{2}}(\ket{0} + e^{i\theta}\ket{1})$. 

\subsection{Basic MBQC Definitions}
The basic MBQC Protocol for classical inputs and outputs can be summarized as follows. Any computation chosen by the Client is first translated into a graph $G = (V,E)$, where two vertices sets $I$ and $O$ define input and output vertices, and a list of angles $\{\phi_v \}_{v \in V}$. The set $\qty{G, I, O, \{\phi_v\}_{v \in V}}$ is called a measurement pattern.

To run a computation, the Client instructs the Server to prepare the graph state $\ket{G}$: for each vertex in $V$, the Server creates a qubit in the state $\ket{+}$ and performs a $\CZ$ gate for each pair of qubits forming an edge in $G$. The Client then asks the Server to measure each qubit of $V$ along the basis $\qty{\dyad{+_{\phi'_v}}, \dyad{-_{\phi'_v}}}$ in the order defined by the flow of the computation. The corrected angle $\phi'_v$  is given by $\phi'_v = (-1)^{s_v^X}\phi_v + s_v^Z\pi$ for binary values of $s_v^X$ and $s_v^Z$ that depend only on the outcomes of previously measured qubits and the flow. More details about the flow and the update rules for the measurement angles can be found in~\cite{hein2004multiparty,DK2006}.

\subsection{Hiding the Computation}
The computation can be totally hidden by using the following observation: if, instead of the Server preparing each qubit in the graph in the state $\ket{+}$, the Client sends $\ket{+_{\theta_v}}$ with $\theta_v \in_R \Theta$, where $\in_R$ indicates that a value was sampled from a set uniformly at random, then measuring the qubits in a similarly rotated basis, obtained by adding $\theta_v$ to the measurement angle, has the same result as the initial computation. If the Client keeps the angle $\theta_v$ hidden from the Server, the Server is completely blind on what computation is being performed. The angle $\theta_v$ acts as a One-Time Pad for the measurement angle $\phi'_v$. Nevertheless, because the Server could always measure the received qubits, it would still learn $1$ bit of information about the angle $\theta_v$, which can take $8$ values and so consists of $3$ bits. To prevent this, another parameter $r_v$ is added for each qubit. The resulting measurement angle sent to the Server is then  $\delta_v = \phi'_v(\phi_v, s_v^X, s_v^Z) + \theta_v + r_v\pi$, for $\phi'_v$, $s_v^X$, $s_v^Z$ defined as above. The parameter $r_v$ serves as a One-Time-Pad for the measurement outcome: if $b_v$ is now the outcome returned by the Server, then we have $s_v = b_v \oplus r_v$ with $s_v$ defined as above. In short, the Client sends randomly rotated qubits that appear maximally mixed to the Server and that become the resource state once entangled. It then guides the computation with a set of classical instructions that are adapted to the effectively prepared resource state but still look perfectly random to the Server. It is the combination of these two parts (quantum state preparation and classical instructions) that leads to the desired blind computation. This idea was first formalized in the \emph{Universal Blind Quantum Computation} (UBQC) Protocol in~\cite{bfk}.

This technique was used previously in protocols that also imposed the computation to be embedded in a universal graph such as brickwork graphs or dotted-complete graphs \cite{KW15,fk}. This last requirement however caused a blow-up in the number of qubits since the Client could not choose the most optimal graph for its desired computation but had to make it fit these universal graphs. Relaxing this requirement allows us to work directly with the same graph as the one used for the Client's desired computation rather than an expanded one. While this leaks the information about the underlying computation graph, all the parameters (i.e.\ angles) remain hidden, which turns out to be sufficient for blindness. The result is a drastic reduction of required memory qubits on the Server's side.

\begin{figure*}[t]
  	\centering
	\includegraphics[width=0.8\textwidth]{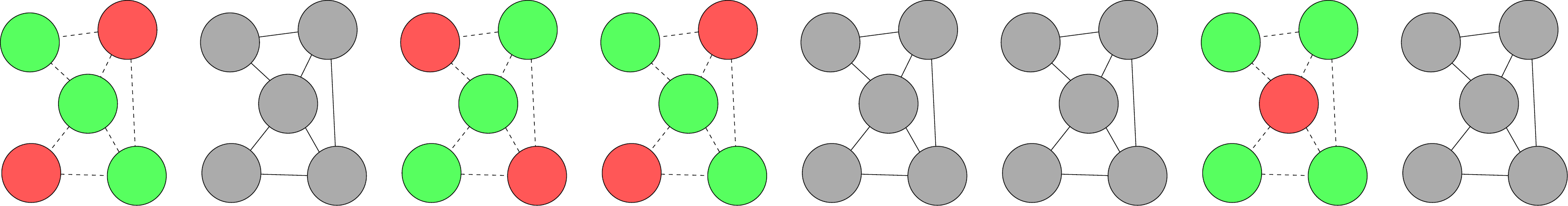}
	\caption{An example set of runs of the proposed protocol. Graphs in grey denote computation runs while graphs containing red nodes (traps) and green nodes (dummies) are test runs. This example graph on five nodes is completely covered with traps by the presented three types of test runs. Note that the Server remains completely oblivious of the differences between the runs, which are solely known to the Client.}
  	\label{fig:protocol_highlevel}
\end{figure*}

\subsection{Verifiability Through Trap Insertion}
In UBQC, the Server is not forced to follow the instructions and the Client can not verify if the computation is done correctly, but a modified version of the protocol allows for such verification. The central idea is to include \emph{trap qubits} at positions unknown to the Server~\cite{fk}. The conditions for successfully inserting traps are that they should have deterministic outcome if measured in the correct basis, should remain undetectable by the Server, and should not affect the computation.

To do this, the Client can send \emph{dummy qubits}, meaning qubits randomly initialized in states $\{\ket{0}, \ket{1}\}$ instead of the usual $\ket{+_{\theta}}$. This has the effect of breaking the graph at this vertex, removing it from $G$ along with any attached edges. Sending such dummies for all neighbours of a vertex isolates it from the rest of the graph, creating a trap. If measured in the same basis as the one used for their preparation, these traps yield deterministic outcomes while being undetectable by the Server. The latter is due to the fact that dummies appear as maximally mixed qubits from the Server's perspective and are thus no different than regular randomly chosen $\ket{+_{\theta}}$ states. The last condition to satisfy -- the possibility to still run the initial computation -- is more challenging. Without further assumptions, it requires the traps to be inserted within a modified graph that contains the computation to be performed. This results in a supplementary overhead in terms of stored qubits and applied gates compared to the UBQC version of the computation. The first protocol achieving verification through trappification was introduced in~\cite{fk} and achieved a quadratic overhead in the number of qubits. It was later optimized, such as in the \emph{Verifiable Blind Quantum Computation Protocol} (or VBQC) of~\cite{KW15} or in \cite{XTH20} to achieve a linear overhead.

The specificity of this paper is its focus on classical inputs and classical outputs. While this case covers the majority of current use cases of quantum devices, its main virtue is to allow for simple trap insertion strategies consisting of interleaving pure computation runs (i.e.\ without inserted traps) with pure test runs (i.e.\ only made up of traps). This simplicity will in turn lead to a simple amplification scheme yielding the desirable exponential confidence in the result of the verification procedure compared to the previous protocols that required full fault-tolerance universal schemes to achieve this level of confidence, thus blowing up the overhead.

More precisely, given a UBQC computation defined by a graph $G$, we construct \emph{test runs} based on a $k$-colouring $\{V_i\}_{i \in [k]}$ of $G$. Recall that by definition, a $k$-colouring satisfies
\begin{align*}
\bigcup_{i=1}^{k} V_i = V , \text{ and }\forall i \in [k],  \, \forall v \in V_i : N_G(v) \cap V_i = \emptyset,
\end{align*}
where $N_G(v)$ are the neighbours of $v$ in $G$. Hence, for each colour $i$, the Client could decide to insert traps for all vertices of $V_i$ while placing dummies in all other positions. This defines the test run associated to colour $i$. It is easy to check that the traps inserted in this way are isolated from other qubits, thus giving deterministic outcomes when measured in their preparation basis, and that they are undetectable from the Server as a test run results for the Server in applying the same sequence of operations as for the regular UBQC computation. The next section will describe how test runs can be used in conjunction with \emph{computation runs}, i.e.~regular UBQC computations, to achieve verified computation.

\section{Noise-Robust Verifiable Protocol}
\label{sec:prot}
\begin{figure}
\begin{algorithm}[H]
\caption{Noise-Robust VBDQC with Deterministic Classical Output}
\label{prot:MQ-VBQC}
\begin{algorithmic}[0]
\STATE \textbf{Inputs:} The Server has no input. The Client has as input:
\begin{itemize}
\item Angles $\qty{\phi_v}_{v \in V}$ and flow $f$ on graph $G$.
\item Classical input to the computation $x \in \bin^{\#I}$.
\end{itemize}
\STATE \textbf{Protocol:}
\begin{enumerate}
\item The Client chooses uniformly at random a partition $(C, T)$ of $[n]$ ($C \cap T = \emptyset$) with $\#C = d$, the sets of indices of the computation and test runs respectively.

\item[2.] For $j \in [n]$, the Client and the Server perform the following sub-protocol (the Client may send message $\Redo_j$ to the Server before step 2.c while the Server may send it to the Client at any time, both parties then restart run $j$ with fresh randomness):
\begin{enumerate}

\item[(a)] If $j \in T$ (test), the Client chooses uniformly at random a colour $\mathsf{V}_j \in_R \qty{V_k}_{k \in [K]}$ (this is the set of traps for this test run).
\item[(b)] The Client sends $\#V$ qubits to the Server (where $\#X$ is the size of $X$). If $j \in T$ and the destination qubit $v \notin \mathsf{V}_j$ is a non-trap qubit (therefore a dummy), then the Client chooses uniformly at random $d_v \in_R \bin$ and sends the state $\ket{d_v}$. Otherwise, the Client chooses at random $\theta_v \in_R \Theta$ and sends the state $\ket{+_{\theta_v}}$.
\item[(c)] The Server performs a $\CZ$ gate between all its qubits corresponding to an edge in the set $E$.
\item[(d)] For $v \in V$, the Client sends a measurement angle $\delta_v$, the Server measures the appropriate corresponding qubit in the $\delta_v$-basis, returning outcome $b_v$ to the Client. The angle $\delta_v$ is defined as follows:
\begin{itemize}
\item If $j \in C$ (computation), it is the same as in UBQC, computed using the flow and the computation angles $\qty{\phi_v}_{v \in V}$. For $v \in I$ (input qubit) the Client uses $\tilde{\theta}_v = \theta_v + x_v\pi$ in the computation of $\delta_v$.
\item If $j \in T$ (test): if $v \notin \mathsf{V}_j$ (dummy qubit), it is chosen uniformly at random from $\Theta$; if $v \in \mathsf{V}_j$ (trap qubit), the Client chooses uniformly at random $r_v \in_R \bin$ and sets $\delta_v = \theta_v + r_v\pi$.
\end{itemize}
\end{enumerate}

\item[3.] For all $j \in T$ (test run) and $v \in \mathsf{V}_j$ (traps), the Client verifies that $b_v = r_v \oplus d_v$, where $d_v = \bigoplus_{i \in N_{G}(v)} d_i$ is the sum over the values of neighbouring dummies of qubit $v$. Let $c_{\mathit{fail}}$ be the number of failed test runs (where at least one trap qubit does not satisfy the relation above), if $c_{\mathit{fail}} \geq w$ then the Client aborts by sending message $\Abort$ to the Server.

\item[4.] Otherwise, let $y_j$ for $j \in C$ be the classical output of computation run $j$ (after corrections from measurement results). The Client checks whether there exists some output value $y$ such that $\# \left\{ y_j \, | \, j \in C,\, y_j = y \right\} > \frac{d}{2}$. If such a value $y$ exists (this is then the majority output), it sets it as its output and sends message $\ok$ to the Server. Otherwise it sends message $\Abort$ to the Server.
\end{enumerate}
\end{algorithmic}
\end{algorithm}
\end{figure}

While our Noise-Robust VBQC Protocol is formally defined in Protocol \ref{prot:MQ-VBQC}, we want to introduce it more intuitively in the next paragraphs in order to emphasize the features that make it suitable for practical purposes.

We now suppose that the Client has settled to perform a fixed computation which it has translated into a computational measurement pattern to be run on a graph $G$. It has also chosen a colouring $\{V_i\}_{i\in [k]}$ of $G$ and has sent it to the Server along with the graph $G$. 

The Client will now run the UBQC Protocol $n$ times successively, but with different update rules for the measurement angles. For $d$ of the runs chosen at random, the Client will update the measurement angles according to the computational measurement pattern thus resulting in computation runs. The remaining $t := n - d$ runs will be turned into test runs. More precisely, for each test run, the Client will secretly choose a colour at random and send traps for vertices of that colour and dummies everywhere else. As stated earlier, the trap qubits will be isolated from each other by the dummies. The Client then instructs the Server to measure all qubits as in computation runs, but with the measurement angle of trap qubits corresponding to the basis they were prepared in and a random measurement basis for the dummies. A test run is said to have \emph{passed} if all the traps yield the expected measurement results, and is said to have \emph{failed} otherwise. Figure~\ref{fig:protocol_highlevel} depicts one possible such succession of computation and test runs.

A direct consequence of this construction is that all runs share the same underlying graph $G$, the same order for the measurements of qubits, and all angles are chosen from the same uniform distribution. We will prove formally later that this implies blindness -- i.e.\ the Server cannot distinguish computation and test runs, nor tell which qubits are traps -- which in turn makes this trap insertion strategy efficient to obtain verifiability.

At the end of the protocol, the Client counts the number of test runs where at least one trap measurement has failed. If this number is higher than a given threshold $w$, the Client aborts the protocol by sending the message $\Abort$ to the Server.\footnote{$w$ would typically be set by the Client given its \textit{a priori} understanding of the quality of the Server. As discussed in Section~\ref{sec:noise_assump}, this does not affect security: a higher value would induce more runs than necessary to achieve a given confidence level, while a lower value would risk aborting with high probability.} Otherwise it sets the majority outcome of the computation runs as its output and sends message $\ok$ to the Server.

The parameters $n$, $d$, $t$, $w$ defined above are fixed for a given instantiation of the protocol and are publicly available to both parties in addition to the graph $G$ and the colouring $\{{V_i}\}_{i\in [k]}$. Their influence on the security bounds and on the noise-robustness of our protocol are detailed in the next sections along with the constraints they must abide. 

\paragraph{Redo Feature.}
Because the Client or the Server may experience failures in their experimental system, they might wish to discard and redo a run $j \in [n]$. In such situation, one of the parties can send a $\Redo_j$ request to the other, in which case the parties simply repeat the exact same run albeit with fresh randomness. To prevent the Client from post-selecting on the measurement results returned by the Server, the $\Redo_j$ request is allowed only so long as the party asking for it is still supposed to be manipulating the qubits of run $j$.

We show in the next section that this does not impact the blindness nor verifiability of the scheme. This means that a dishonest Server cannot use $\Redo$ requests to trick the Client into accepting an incorrect result.

However, this capability of our protocol has an important practical impact: without this possibility, honest failures of the experimental devices happening during a test run would be counted as a failed test run, thus decreasing drastically the likelihood of successfully performing a verified computation, i.e.\ getting an $\ok$. Here, as they can be safely ignored, the only consequence of experimental failures is to increase the expected number of repeated runs.

Note that, it is possible for the Client to choose the secret variables for all runs in advance, taking into account the probability that a given run $j$ will not be redone due to experimental defects. If this probability of not resetting a run on honest devices is given by $p_{\mathit{succ}}$, then the number of runs to be pre-sampled by the Client is $N = \order{\frac{n}{p_{succ}}}$, with the same proportion of computation $d/n$ and test runs $t/n$.

\paragraph{Exponential Security Amplification Feature.}
The above approach to trap insertion is efficient as the only overhead is the repetition of the same sub-protocol. Yet, as such, it would seemingly only achieve a security bound that is inverse-polynomial in the number of runs. For instance, using a single computation run and $n-1$ test runs would let a $1/n$ chance for the Server to corrupt the computation run. The only previously-known method to reduce the cheating probability to exponentially-low bounds was to insert traps into a single computation run at the expense of using a more complicated graph and then using fault-tolerant quantum error-correction codes to achieve the desired amplification of the security. By restricting our target to classical inputs and outputs, we will prove that it is sufficient to use only a classical repetition error-correcting code to go from inverse polynomial to exponentially-low cheating probability.

There are two practical impacts of this ability. The first one is an economy in terms of qubits and gates required to perform a computation up to the point where fully fault-tolerant Servers are not the only option to implement VBQC schemes. The second relates to security: even though our protocol will be proven secure in case of sequential and parallel executions, each use of the protocol still offers more opportunities for an attacker to succeed. Hence, through exponential security amplification, Clients can rely on a verifiable delegated quantum computation service provider for an indefinite amount of time: Lowering the single run cheating probability to cope with a linear number of potential uses requires only a logarithmic increase of resources devoted to the security of their computations (in terms of number of repetitions).

Note that such amplification is a common and rather intuitive feature for purely classical scenarios where attacks can be correlated across various rounds. Although this claim has been made as well in the quantum case in previous works \cite{fk,KW15,KD19}, it remained up to now unproven. The difficulty that the following section addresses is that, in the quantum realm, attacks can be entangled across rounds in a way that is much more powerful than what is possible with classical correlations.

\section{Security Results and Noise Robustness}
\label{sec:sec_rob}

\subsection{Security Analysis}
\label{sec:corr_bl}
We show in this section that the protocol presented above is secure in the Abstract Cryptography Framework of~\cite{MR11}.
In this framework, security means that an Ideal Resource, secure by definition, cannot be distinguished from its real-world implementation, i.e.~the protocol.
To assess this property, all parties accessing the ideal and real setups will be represented by a single distinguisher.
Its purpose is to choose the inputs to both setups and to try to tell them apart by analysing exchanged messages as well as obtained outputs.
When the protocol is meant to be secure against some malicious parties, the distinguisher will additionally have the possibility to deviate from instructions sent to these potentially malicious parties, thus granting it even more opportunities to tell both setups apart (see~\cite{PR14} for an introduction to Abstract Cryptography applied to QKD).

Abstract Cryptography has two main virtues.
First, because the Ideal Resource is secure by design and indistinguishable from the real-world implementation, so is the latter.
This implies a higher standard of security than in other approaches (see e.g.~\cite{KRBM07a} and Section 5.1 of~\cite{PR14}).
Second, it is by construction composable. This means that security holds in situations where the protocol is repeated sequentially, used in parallel or used in conjunction with other protocols.
As a consequence, the security of our protocol will hold in a wide range of situations of practical interest such as when different runs are distributed to different machines to reduce the overall execution time.

Here, instead of following the direct approach to proving security outlined above, we will take a slightly different path.
We will use the results of~\cite{DFPR14} that reduce security of a Verifiable Delegated Quantum Computation Protocol to the conjunction of four stand-alone criteria:
\begin{itemize}
\item $\epsilon_{\mathit{cor}}$-local-correctness, which is satisfied if the protocol with honest players outputs the expected output;
\item $\epsilon_{\mathit{bl}}$-local-blindness, meaning that the malicious Server's state at the end of the protocol is indistinguishable from the one which it could have generated on its own;
\item $\epsilon_{\mathit{ver}}$-local-verifiability, if either the Client accepts a correct computation or aborts at the end of the protocol.
\item $\epsilon_{\mathit{ind}}$-independent-verification, i.e.\ the Server can determine on its own, using the transcript of the protocol and its internal registers, whether the Client will decide to abort or not. 
\end{itemize}
Then, the Local-Reduction Theorem (Corollary 6.9 from~\cite{DFPR14}) gives the following:
If a protocol implements a unitary transformation on classical inputs and is $\epsilon_{\mathit{cor}}$-locally-correct, $\epsilon_{\mathit{bl}}$-locally-blind and $\epsilon_{\mathit{ver}}$-locally-verifiable with $\epsilon_{\mathit{ind}}$-independent verification, then it is $\epsilon$-composably-secure with:
\begin{equation}\label{eq:reduction}
\epsilon = \mathit{max}\qty{\delta, \epsilon_{\mathit{cor}}} \text{ and } \delta := 4\sqrt{2\epsilon_{\mathit{ver}}} + 2\epsilon_{\mathit{bl}} + 2\epsilon_{\mathit{ind}}.
\end{equation}
The reader is reported to Appendix \ref{app:form_def} for formal definitions of the Ideal Resource and the local criteria described above.

With this at hand, we can state our main result:
\begin{theorem}[Security of Protocol \ref{prot:MQ-VBQC}]
\label{theo:sec_VDQC}
For $n = d+t$ such that $d/n$ and $t/n$ are fixed in $(0,1)$ and $w$ such that $w/t$ is fixed in $(0, 1/2k)$, Protocol~\ref{prot:MQ-VBQC} with $d$ computation runs, $t$ test runs, and a maximum number of tolerated failed test runs of $w$ is $\epsilon$-composably-secure with $\epsilon = 4\sqrt{2\epsilon_{\mathit{ver}}}$ and with $\epsilon_{\mathit{ver}}$ exponentially small in $n$.
\end{theorem}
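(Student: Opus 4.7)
The plan is to invoke the Local-Reduction Theorem summarised in Equation~\eqref{eq:reduction}: establish each of the four stand-alone properties (local-correctness, local-blindness, local-verifiability, and independent-verification) separately and then combine them. Since the bound advertised in the statement is $\epsilon = 4\sqrt{2\epsilon_{\mathit{ver}}}$, the three other criteria should either vanish exactly or be subsumed; the bulk of the proof is therefore the verifiability argument.

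First I would dispatch the three easy criteria. For local-blindness, I would follow the standard UBQC line of reasoning: the angle $\theta_v$ acts as a perfect one-time pad on every transmitted angle $\delta_v$, while each dummy $\ket{d_v}$, once $d_v$ is averaged, is maximally mixed just like the averaged $\ket{+_{\theta_v}}$ states; so the Server's view of any run, regardless of whether it is a computation run, a test run, or which colour was drawn, reduces to the same marginally uniform sequence of qubits and angles, giving $\epsilon_{\mathit{bl}}=0$. Independent verification follows for free from the fact that the Client explicitly broadcasts $\ok$ or $\Abort$ in the last step of Protocol~\ref{prot:MQ-VBQC}, so the Server learns the decision exactly from the transcript, giving $\epsilon_{\mathit{ind}}=0$. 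Local-correctness on noiseless honest devices is immediate from UBQC correctness and the deterministic outcome of isolated traps: every computation run produces the correct $y$, every test run passes, so $c_{\mathit{fail}}=0<w$ and the majority trivially returns $y$, giving $\epsilon_{\mathit{cor}}=0$.

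The heart of the proof is the bound on $\epsilon_{\mathit{ver}}$. Without loss of generality I purify the malicious Server into a single global isometry acting coherently across all $n$ runs and an ancillary register. Then, using the Pauli-twirl mechanism enabled by the secret one-time-pad bits $r_v$ (and the randomised angles $\theta_v$), the effective attack on the Client's qubits decomposes as a convex combination of Pauli errors $P$; by convexity it is enough to bound verifiability on each Pauli branch. For a fixed Pauli $P$ supported on a set $S$ of qubits, let $m_j$ denote the number of qubits of $S$ lying in run $j$. A test run $j\in T$ passes only if $S$ avoids every trap of the randomly chosen colour, an event whose probability, by uniformity of the colour draw, is at most $(1-1/k)^{m_j}$ per attacked vertex; a computation run $j\in C$ can have a corrupted output only if $m_j>0$ and the error survives the flow-induced corrections. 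The decisive observation is that, by the already-established blindness, the support $S$ is statistically independent of the Client's secret partition $(C,T)$, so the runs on which $P$ acts non-trivially are distributed between $C$ and $T$ by a uniform random bipartition.

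From here the argument becomes a counting problem. For the Server to flip the majority vote it must corrupt strictly more than $d/2$ runs of $C$; by exchangeability this forces the total number of non-trivially attacked runs to be at least $\Omega(n)$, and therefore in expectation at least $t/(2k)$ test runs fail. Since by hypothesis $w/t<1/(2k)$, a Hoeffding/Chernoff bound applied to the number of failed test runs, conditioned on the Pauli branch, gives that the joint event ``$c_{\mathit{fail}}<w$ \emph{and} the majority vote is wrong'' has probability exponentially small in $n$. Averaging over the Pauli mixture preserves the bound and yields $\epsilon_{\mathit{ver}}=\exp(-\Omega(n))$; the square root in Equation~\eqref{eq:reduction} preserves exponential decay, so $\epsilon = 4\sqrt{2\epsilon_{\mathit{ver}}}$ is exponentially small in $n$ as claimed. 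The main obstacle I anticipate is precisely this step: the Server's attack is highly entangled across runs and its ancilla, so the failure events are neither independent nor even Markovian. The clean workaround is to condition on the Pauli branch after twirling, treat the Client's partition and colouring as the sole remaining randomness, and reduce the analysis to a purely combinatorial sampling argument where standard concentration applies; the parameter regime $d/n,t/n\in(0,1)$ and $w/t<1/(2k)$ is exactly what is needed to ensure a strictly positive Chernoff exponent.
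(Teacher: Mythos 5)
Your proposal follows essentially the same route as the paper: reduce to the four stand-alone criteria via the Local-Reduction Theorem, argue that blindness, independent verification and (noiseless) correctness are perfect, and then prove exponential verifiability by Pauli-twirling the Server's global deviation and turning the problem into a concentration bound over the Client's secret partition and colour choices, exploiting that an attack corrupting more than $d/2$ computation runs must hit $\Omega(n)$ runs and hence fail roughly a $1/k$ fraction of the affected test runs, which exceeds $w$ when $w/t<1/2k$. The only slip is the claim that an attacked test run passes with probability at most $(1-1/k)^{m_j}$ — since one colour is drawn per test run rather than per vertex, the correct bound is $1-1/k$ — but your downstream argument only uses the weaker (and correct) per-run failure probability of at least $1/k$, so this does not affect the conclusion.
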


In the paragraphs below, we show that our protocol satisfies each of the stand-alone criteria before combining them to get composable security.

\paragraph{Perfect Local-Correctness.} 
On perfect (non-noisy) devices, local-correctness is implied by the correctness of the underlying UBQC Protocol. 
This is because all the completed computation runs correspond to the same deterministic UBQC computation, and that on such devices, general UBQC Protocols have been proven to be perfectly correct \cite{bfk, DFPR14}.
Thus $\epsilon_{\mathit{cor}} = 0$. 

\paragraph{Perfect Local-Blindness.} 
In case the computation is accepted, each run looks exactly like a UBQC computation to the Server.
Therefore the blindness comes directly from the composability of the various UBQC runs that make our protocol~\cite{DFPR14}.
In case the computation is aborted, we need to take into account the fact that a possibly malicious Server could deduce the position of a trap qubit.
That could be the case if it attacked a single position in the test runs and got caught.
Yet, as the position of the traps is not correlated to the input nor to the computation itself, knowing it does not grant additional attack capabilities to the Server, and blindness is recovered again as a consequence of the blindness of UBQC. 
More detailed statements can be found in Appendix~\ref{app:proof_lb}, where it is also shown that $\Redo$ requests have no effect on the local-blindness of the scheme.

\paragraph{Perfect Local-Independent-Verification.} 
Because in our protocol, the Client shares with the Server whether the computation was a success or an abort, this is trivially verified.

\paragraph{Exponential Local-Verifiability.} 
Local-verifiabili\-ty is satisfied if any deviation by the possibly malicious Server yields a state that is $\epsilon_{\mathit{ver}}$-close to a mixture of the correct output and the $\Abort$ message.
Equivalently, the probability that the Server makes the Client accept an incorrect outcome is bounded by $\epsilon_{\mathit{ver}}$.
Let $d/n$, $t/n$ and $w/t$ be the ratios of test, computation and tolerated failed test runs.
Our protocol's local-verifiability is given by Lemma \ref{thm:verif} and is proven in Appendix \ref{app:proof_ver}.
We give below a sketch of the main ideas yielding the result.

\begin{lemma}[Local Verifiability of Protocol \ref{prot:MQ-VBQC}]
\label{thm:verif}
\label{thm:soundness}
Let $0 < w/t < 1/2k$ and $0 < d/n < 1$ be fixed ratios, for a $k$ the number of different test runs.
Then, Protocol~\ref{prot:MQ-VBQC} is $\epsilon_{\mathit{ver}}$-locally-verifiable for exponentially-low $\epsilon_{\mathit{ver}}$.
\end{lemma}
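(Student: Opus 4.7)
The plan is to bound the probability that a malicious Server forces the Client to accept an incorrect result, and to exhibit an exponential decay in $n$. The argument proceeds in three steps: a reduction of general deviations to classical Pauli attacks, a classification of runs by whether they are harmful, and a concentration argument on the resulting random variables.

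First, I would use the standard twirling argument~\cite{bfk,DFPR14}: because the Client refreshes the one-time-pad keys $\theta_v$, $r_v$, and dummy bits $d_v$ independently at every run, any CPTP deviation the Server applies can be decomposed, after averaging over these secrets, as a convex combination of tensor products of Pauli operators acting per qubit and per run. The Server's internal memory may induce arbitrary classical correlations between the chosen Paulis across runs, but these are just probabilistic weights in the analysis. It thus suffices to bound the cheating probability for a fixed deterministic choice of a Pauli on every qubit of every run, then average over the distribution.

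Next, I would classify each run $j$ as \emph{active} if its Pauli attack acts non-trivially in a way that could either corrupt the deterministic classical output (when $j$ is a computation run) or anticommute with the measurement basis of at least one trap qubit for some colour $i \in [k]$ (when $j$ is a test run). By the perfect local-blindness established above, the Server's choice of active runs is independent of the uniformly random partition $(C, T)$ and of the uniformly random colour $\mathsf{V}_j$ drawn for each test run. The crucial observation is that if run $j$ is active and falls in $T$, then at least one qubit touched by the attack lies in some colour class $V_i$, and since the colour is uniform on $[k]$, the attack meets the trap set with probability at least $1/k$; hence an active test run fails with probability at least $1/k$.

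Finally, for the Server to succeed, at least $d/2$ computation runs must carry the same incorrect output, and at most $w$ test runs must signal a failure. Let $m$ be the number of active runs. Then the count $X$ of active runs inside $C$ follows a hypergeometric distribution on $n$ items with $m$ successes and $d$ draws, with mean $m\,d/n$, while the count of failed tests is stochastically lower-bounded by a binomial with parameters $(m - X, 1/k)$. The success event requires $X \geq d/2$ and fewer than $w$ test failures. Using Hoeffding's inequality for the hypergeometric and a Chernoff bound for the binomial, the strict inequality $w/t < 1/(2k)$ leaves a constant gap that forces these two conditions to coexist only with probability exponentially small in $n$, uniformly in $m$. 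Summing over $m$ and over the at most exponentially many effective Pauli trajectories (which is absorbed by the exponents provided the constants are chosen appropriately) yields $\epsilon_{\mathit{ver}}$ exponentially small in $n$.

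The main obstacle will be making the first step fully rigorous: cross-round entangled attacks combined with the Server's quantum side information require care to ensure that conditioning on the Server's memory does not break the independence between the effective per-run Pauli deviations and the Client's partition and colour choices used in the concentration step. A further subtlety, specific to the majority-vote aggregation, is that one must rule out not merely the event that more than $d/2$ computation runs are corrupted, but that \emph{some particular} incorrect value receives more than $d/2$ votes; this is handled by a union bound over possible output strings, which costs only a polynomial factor and does not spoil the exponential decay.
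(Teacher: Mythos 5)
Your proposal follows essentially the same route as the paper's proof in Appendix~\ref{app:proof_ver}: twirl the Server's deviation into a convex combination of Pauli attacks using the secret preparation angles and pads, reduce to the question of whether the $X$/$Y$-part of the attack hits at least $d/2$ computation runs while triggering fewer than $w$ test failures, and then apply hypergeometric and binomial tail bounds, with the gap $w/t < 1/(2k)$ supplying the exponential decay. The classification into ``active'' runs and the observation that an active test run fails with probability at least $1/k$ is exactly the paper's reduction to single-qubit-per-run attacks $\sigma_m$.

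Two steps in your write-up are stated incorrectly, though both are repairable. First, you cannot ``sum over the at most exponentially many effective Pauli trajectories'' and absorb this into the exponent: a union bound over $4^{\#V\cdot n}$ Pauli strings multiplied by a bound of order $e^{-cn}$ does not vanish. The correct move --- which you in fact state earlier --- is that after twirling, the failure probability is a \emph{convex combination} $\sum_\sigma \abs{\alpha_\sigma}^2 f(\sigma)$ with $\sum_\sigma \abs{\alpha_\sigma}^2 = 1$, hence bounded by $\max_\sigma f(\sigma)$; only the maximum over the polynomially many values of $m$ remains. Second, the union bound over incorrect output strings is both unnecessary and miscounted (there are $2^{\#O}$ candidate outputs, not polynomially many). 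It is not needed because the acceptance rule demands a \emph{strict} majority: if fewer than $d/2$ computation runs are corrupted, then more than $d/2$ runs report the correct value, which therefore wins the vote; so ``at least $d/2$ computation runs affected'' is already a necessary condition for accepting any wrong output, and this is the only event you must bound. With these two corrections your argument matches the paper's.
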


\begin{proof}[Proof Sketch]
The first step is to describe all the messages received by the Client during the execution of the protocol without making assumptions on the behaviour of the Server.
This comprises the outcomes of the computational runs, but also the measurement of trap qubits and of any other qubit used in the computation or in the tests.
Following~\cite{fk}, this can be expressed as the state one would obtain in the perfect protocol followed by a pure deviation on this state.

The second step consists of using this state to bound the probability of failure, i.e.~the probability of accepting the computation but having the wrong result.
This happens if at least $d/2$ outcomes of the computation runs have had at least one bit-flip, and no more than $w$ test runs have failed.

In the third step, we use the randomisation over the prepared qubits and measurement angles to twirl the deviation of the Server and to reduce it to diagonal form in the Pauli basis.
This further simplifies the expression for the bound.

The fourth step exploits this reduced form of the attack by noticing that it can be dealt with in a classical fashion.
To this end, possible attacks are classified using two criteria:
\begin{enumerate}
	\item Does the attack affect at least $d/2$ computation runs? Only such attacks stand a chance to corrupt the result of the computation, otherwise the repetition code automatically corrects the deviations.
	\item Does the attack make less than $w$ of the $t$ test runs fail? Only then will the deviations be tolerated without triggering a client-side abort.
\end{enumerate}
Depending on the answer to the questions above, the attack falls in one of four regimes.
Optimally, the protocol would abort if and only if the result of the computation is corrupted.
Clearly, we cannot hope to perfectly achieve this.
We therefore must take into account two types of incorrect categorisation.
A \emph{false positive} happens when the protocol aborts although less than $d/2$ computation runs have been affected.
While this is undesirable behaviour in terms of noise-tolerance, it does not affect security.
Since we are here analysing the verifiability of our protocol, we are solely concerned about \emph{false nagatives}: the attack affects at least $d/2$ computation runs and no abort is triggered.
To achieve a satisfying level of security, no attack should falls into this regime with more than negligible probability.

For intuition's sake, we give here an analysis of the average case (Figure~\ref{fig:intuition}).
If the Server deviates in exactly half of the runs on a single qubit (which we suppose to be sufficient for corrupting the computation), the number of affected computation runs will be $d/2$ on average.
In other words, there are good chances that just enough computation runs are affected to corrupt the final result.
This is because the attacker is blind and hence the deviations are randomly distributed over computation and test runs.
Similarly, we expect the number of affected test runs to be $t/2$.
Considering that any qubit in a run has a probability of $1/k$ to be a trap, we expect the number of failed test runs, i.e.~the number of test runs with at least one affected trap, to be $t/2k$.
As a consequence, choosing $w \geq t/2k$ cannot lead to a secure protocol, since the simple attack described above has a non-negligible probability of corrupting the final result while remaining unnoticed.
Conversely, setting $w \leq t/2k$ foils this strategy.

The proof presented in Appendix~\ref{app:proof_ver} goes beyond this average case analysis by showing that this attack is essentially optimal.
It uses concentration bounds for the underlying probability distributions to obtain precise bounds that are exponentially-low in the various parameters.
\end{proof}

\begin{figure*}[ht]
  	\centering
	\includegraphics[width=0.8\textwidth]{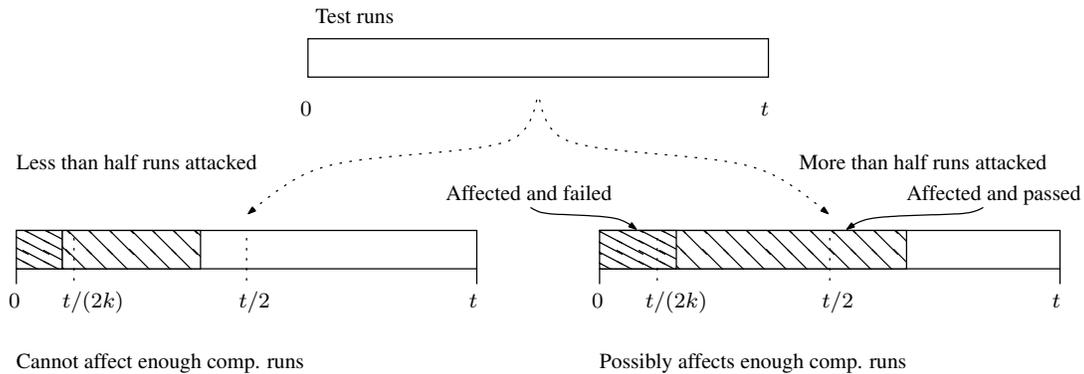}
	\caption{Because of blindness, an attack on less than half of the runs is likely to affect less than $d/2$ computation runs.
Such attack being error corrected, the protocol should output a result.
On the contrary, if an attack is performed on more than half the runs, it has a chance to corrupt the computation and the result should not be trusted.
Likewise, in the first case, less than $t/2$ test runs should be affected, while in the second more than $t/2$ should be.
Yet, there are only $1/k$ test qubits per test run meaning that affected test runs turn into failed test runs with $1/k$ probability.
If the threshold value $w$ is set below $t/2k$, aborts are thrown only when there is a effective risk of tampering.
Here, we arbitrarily test $k=4$.}
  	\label{fig:intuition}
\end{figure*}

\paragraph{Proof of Exponential Composable-Security.} 
Our protocol has perfect correctness (for noiseless devices), blindness and input-independent verification. In addition, it is $\epsilon_{\mathit{ver}}$-locally-verifiable with $\epsilon_{\mathit{ver}}$ exponentially small in $n$. Therefore, by the Local-Reduction Theorem, it is $\epsilon$-composably-secure with $\epsilon = \delta = 4\sqrt{2\epsilon_{\mathit{ver}}}$ and $\epsilon $ exponentially small in $n$.
Note that because we used the Local-Reduction Theorem to obtain composable security, we incurred an additional square root on our verifiability bound given by Equation~\ref{eq:reduction} and needed to satisfy the additional independence property.
This is of course not required if the protocol is meant to be used in a stand-alone setting such as in early NISQ-era experiments.

\subsection{Noise Robustness}
\label{sec:noise_rob}
\subsubsection{Local-Correctness on Honest-but-Noisy Devices}

The local-correctness property discussed in the previous section did not take into account device imperfections. In fact, the analysis of blindness and verification makes no distinction between these imperfections and potentially malicious behaviours. 
Although satisfying these properties makes our protocol a concrete implementation of the Ideal Resource for Verifiable Delegated Quantum Computation, it could still fall short of expectations in terms of usability. 
Fortunately, for a class of realistic imperfections, our protocol has the additional property of being capable of correcting their impact and accepting with high probability. The final outcome is then the same as that obtained on noiseless devices with honest participants. 

This additional \emph{noise-robustness} property, the main innovation of this paper, amounts to prove that Protocol~\ref{prot:MQ-VBQC} satisfies the local-correctness property with negligible $\epsilon_{\mathit{cor}}$ for noisy honest Client and/or Server devices. We prove this property under the following restrictions:

\begin{itemize}
\item The noise can be modelled by run-dependent Markovian processes -- i.e.\ a possibly different arbitrary CPTP map acting on each run.
\item The probability that at least one of the trap measurements fails in any single test run is upper-bounded by some constant $\pmax < 1/2$ and lower-bounded by $\pmin \leq \pmax$.
\end{itemize}

Theorem~\ref{thm:correctness} states that, in order for the protocol to terminate correctly with overwhelming probability on these noisy devices, $w$ should be chosen such that $w/t > \pmax$. Conversely, for any choice of $w/t < \pmin$, we show that the protocol aborts with overwhelming probability. See Appendix \ref{app:proof_rob} for its formal version and proof.

\begin{theorem}[Local-Correctness of VDQC Protocol on Noisy Devices (Informal)]
\label{thm:correctness}
Assume a Markovian run-dependent model for the noise on Client and Server devices and let $\pmin \leq \pmax < 1/2$ be respectively a lower and an upper-bound on the probability that at least one of the trap measurement outcomes in a single test run is incorrect. If $w/t > \pmax$, Protocol~\ref{prot:MQ-VBQC} is $\epsilon_{\mathit{cor}}$-locally-correct with exponentially low $\epsilon_{\mathit{cor}}$. On the other hand, if $w/t < \pmin$, then the probability that Protocol~\ref{prot:MQ-VBQC} terminates without aborting is exponentially low.

\end{theorem}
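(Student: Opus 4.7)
The plan is to treat the Markovian noise as independent Bernoulli failures across test runs and then apply Hoeffding-style concentration bounds in both directions, with the verifiability result of Lemma~\ref{thm:verif} bridging from ``no abort'' to ``correct output''.

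For the first direction ($w/t > \pmax$), the event that the noisy honest protocol fails to output the correct value decomposes as the union of ``protocol aborts'' and ``protocol accepts but outputs a wrong value''. The Markovian assumption means that the $t$ test-run failure indicators are mutually independent Bernoulli variables, each with expectation at most $\pmax$, so that $c_{\mathit{fail}}$ is a sum of independent indicators with mean at most $t\pmax$. Since $w/t > \pmax$, Hoeffding's inequality gives $\Pr[c_{\mathit{fail}} \geq w] \leq \exp\!\bigl(-2t(w/t - \pmax)^2\bigr)$, which is exponentially small in $t$. For the second term, an honest-but-noisy evolution is a particular Server strategy of product form, hence covered by the supremum defining $\epsilon_{\mathit{ver}}$ in Lemma~\ref{thm:verif}; the probability of accepting a wrong output is therefore at most $\epsilon_{\mathit{ver}}$. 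A union bound yields $\epsilon_{\mathit{cor}} \leq \exp\!\bigl(-2t(w/t - \pmax)^2\bigr) + \epsilon_{\mathit{ver}}$, which is exponentially small in $n$.

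For the second direction ($w/t < \pmin$), the argument is a mirror-image application of Hoeffding. Each test run now fails independently with probability at least $\pmin$, so the expected number of failed test runs is at least $t\pmin > w$, and the one-sided Hoeffding bound gives $\Pr[c_{\mathit{fail}} < w] \leq \exp\!\bigl(-2t(\pmin - w/t)^2\bigr)$. Thus the protocol aborts with probability exponentially close to one.

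The main obstacle, and the step I would scrutinise most, is the application of Lemma~\ref{thm:verif} to the noise process rather than to a full adversary. The argument is that a product of independent CPTP noise maps, one per run, fits inside the class of Server strategies over which $\epsilon_{\mathit{ver}}$ is a uniform bound; the non-trivial check is that the noise-robust regime $\pmax < w/t$ is compatible with the verifiability regime $w/t < 1/(2k)$, implicitly demanding $\pmax < 1/(2k)$ for the first direction to yield a useful bound. A secondary technical point is ensuring that the $t$ Bernoulli variables really are independent: this requires that the Client's trap colour choice $\mathsf{V}_j$ and the run-dependent noise map do not introduce correlations across runs, which follows from the Markovian hypothesis and the independence of the fresh randomness sampled for each run.
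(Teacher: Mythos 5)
Your treatment of the test runs --- independent (not necessarily identical) Bernoulli failure indicators and a one-sided Hoeffding bound in each direction, giving $\exp(-2t(w/t-\pmax)^2)$ for the abort probability when $w/t>\pmax$ and $\exp(-2t(\pmin-w/t)^2)$ for the acceptance probability when $w/t<\pmin$ --- coincides with the paper's, which phrases the same step as stochastic domination of the failed-test count by $(t,\pmax)$- and $(t,\pmin)$-binomials followed by Lemma~\ref{lemma:hoeffding_binomial}. The second direction of the theorem is therefore fully established by your argument.

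The gap is in how you dispose of the ``accept but wrong output'' event. You outsource it to Lemma~\ref{thm:verif}, but that lemma only yields an exponentially small $\epsilon_{\mathit{ver}}$ under the hypothesis $w/t<1/(2k)$, which is not among the hypotheses of Theorem~\ref{thm:correctness}: the theorem allows any $\pmax<1/2$ together with any $w/t>\pmax$, so for $\pmax\geq 1/(2k)$ every admissible choice of $w$ lies outside the verifiability regime and your bound on this term is vacuous. You flag this tension yourself but do not resolve it. The paper avoids it entirely by never invoking verifiability here: it bounds the number $Z$ of corrupted computation runs directly, observing that any run-level noise that flips an output bit of a computation run would have tripped at least one trap had that same run been a test run of the appropriate colour, so each computation run is corrupted with probability at most $\pmax$; independence of the noise across runs then lets $Z$ be dominated by a $(d,\pmax)$-binomial, and since $\pmax<1/2$ Hoeffding gives $\Pr[Z\geq d/2]\leq\exp(-2(1/2-\pmax)^2 d)$, i.e.\ the majority vote self-corrects with no reference to $k$ or to $w$. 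This is precisely the second exponential term in the formal statement, and it is what lets the theorem reach all the way to $\pmax<1/2$. A secondary weakness of your route is that Client-side device noise is part of the theorem's noise model but is not obviously ``a particular Server strategy'': even in the regime $w/t<1/(2k)$, your appeal to Lemma~\ref{thm:verif} needs an additional argument pushing preparation noise onto the Server's side, whereas the paper's direct counting argument never requires this reduction.
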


\begin{proof}[Proof Sketch]

Similarly to the previous proof for verifiability, we will here give an intuitive average-case analysis. The full proof uses similar concentration bounds. 

Starting with the case where $w > t\pmax$, we need to analyse the probability that the computation and the Client accepts it without aborting. This happens if the noise has not disturbed too many computation and test runs (less than $d/2$ and $w$ respectively). These events are independent and are analysed as such in the following since the noise is independent across runs and the nature of each run is chosen uniformly at random. 

If we assume that any corruption on a single qubit of a computation run leads to a corrupted computation, we deduce that $\pmax$ is also an upper-bound on the probability that the outcome of the computation is incorrect. We can then upper-bound the number of failed computation runs by a random variable that follows $(d, \pmax)$-binomial distribution, of mean $d\pmax$. Since $\pmax < 1/2$, we have that $d\pmax < d/2$ and therefore the noise will not corrupt the computation.

Regarding the test runs, a similar reasoning gives that it is also possible to upper-bound the number of failed test runs by random variable following a $(t, \pmax)$-binomial distribution, of mean $t\pmax < w$. We conclude that not enough test runs are corrupted to trigger an abort and the Client accepts the (correct) outcome of the computation.

We now focus on the case where $w < t\pmin$ and show that the Client rejects the computation. The number of failed test runs is lower-bounded by a random variable following a $(t, \pmin)$-binomial distribution, of mean $t\pmin$. This number of failed tests is higher than the acceptable threshold and the Client therefore aborts. This is a case of \emph{false-positive} since the outcome is still correct in this scenario.

\end{proof}

Since the results for blindness, blindness, input-independent verification and verifiability already integrate the noise in their analyses (as explained below, they consider the most general deviation, which includes noise), this new bound concerning local-correctness on noisy devices can also be used alongside these previous bounds using the Local-Reduction Theorem from~\cite{DFPR14}, yielding in this case a value for $\epsilon = \mathit{max}\qty{\delta, \epsilon_{\mathit{cor}}}$ that may now depend on the noise level of the devices since $\epsilon_{\mathit{cor}} > 0$ (however, note that if $\delta > \epsilon_{\mathit{cor}}$, then the noise has no impact on the total $\epsilon$).

\subsubsection{Role of Noise Assumptions}\label{sec:noise_assump}

As stated in the last section, our security proof does not rely on any assumption regarding a specific form or strength of the noise. On the contrary, it simply considers any deviation as potentially malicious and showed that the protocol would provide information theoretic verification and blindness of the computation. 

The assumptions introduced in this section serve a different purpose: when the imperfections of the devices are light enough, we show that these will systematically correspond to attacks whose impact on the computation is error-correctable. The noise models include independent Pauli operators acting on qubits, but also more general operators that are independent between various runs, etc. As a consequence, under such mild restrictions, the computation will accept with high probability. The Client will hence not only get a security guarantee but also a performance guarantee in that it will obtain a result.

Consequently, there is no security risk in first probing the device to find out about the noise level and then using it to set the admissible ratio of failed test runs $w/t$ to a compatible value. The value for $w/t$ might even be adjusted between two executions of the full protocol to cope with drifting values of noise. An over-inflated value of $w/t$ only results in superfluous repetitions and hence a longer running time for the protocol. Conversely, setting the value of $w/t$ too low carries the risk of aborting most of the time and thus not being able to ever complete the computation, as is the case with previous protocols without error-correction capabilities.

Regarding the assumption that the noise maps between each run are independent, this can realistically be achieved in an experimental setup by simply waiting long enough between each run for all the states to decohere. This guarantees that no information about a previous run can seep in the next execution. This of course prolongs the duration of the experiment, but considering the low coherence time of quantum memories in the NISQ era (which is precisely the regime that this protocol targets), this overhead is not too prohibitive.

\subsubsection{Link to Fault-Tolerance}

We want to emphasize that our scheme does \emph{not} rely on nor provide fault-tolerance. It is quite the opposite in that it decouples the error-correction scheme devoted to fault-tolerant computing from the one that ensures robust verification. More precisely, we define the perceived noise level as the average ratio of test runs that fail given a specific noise model. Given the importance of this parameter, it is legitimate to wonder how it can be optimized.

It is a global metric in the sense that it is obviously easier to attain a perceived noise level below our \(1/4\) threshold for a $2$-qubit computation\footnote{For $2$-qubit computation, there are necessarily $2$ types of test runs.} than it is for a $100$-qubit computation. This can be understood in the following way: the built-in error correction capability does not stop errors from propagating into the result, it only tries to recover the correct result from the noisy outcomes of each unprotected computational run. The other way around, a fault-tolerant scheme prevents error-propagation and would be useful in lowering the ratio of failed test runs in large computations and arrive at, for example, the \(1/4\) ratio of failed test runs required to perform the verified computation.

An advantage of this decoupling between fault-tolerant computation and robust verification is that, while fault-tolerance is likely unavoidable for large computations on yet-to-be-created systems, near-term and sufficiently noiseless devices (at the physical level) will allow for verified intermediate-scale computations by using our scheme. This would be unreachable if performed in a fault-tolerant way due to large resource overheads.

\section{Discussion}
\label{sec:concl}
\subsubsection{Overhead Optimisation}

Once the noise level $\pmax$ has been determined and used to constrain $w$ as explained in Section \ref{sec:noise_rob}, we can look at ways to optimize the resource overhead for robust verification in terms of excess number of runs compared to standard MBQC\footnote{Note that this is the only source of overhead as each run requires the same resources as the original MBQC.}.

The first obvious parameter influencing the overhead is the number $k$ of different types of test runs. Higher values of \(k\) induce lower amounts of tolerated noise and a higher number of repetitions. In our scheme, $k$ also corresponds to the number of colours in the graph colouring chosen by the Client on the Server's graph $G$. While the problem of finding an optimal (i.e.\ minimal) graph colouring is $\NPC$ for general graphs, there exist efficient algorithms to compute approximately optimal graph colourings.

For any general graph, a greedy algorithm yields a $k$-colouring for $k \leq D(G) + 1$, with $D(G)$ being the maximum degree of $G$. Also note that most graphs used in MBQC are planar and the celebrated $4$-Colour Theorem states that any planar graph is $4$-colourable, hence bounding \(k\) by $4$. Efficient algorithms to find such a colouring exist (quadratic in the number of vertices of the graph~\cite{RSST96}) which would then be of practical interest in designing robust verifiable schemes. Furthermore, as \(k= 2\) is the best possible value for our scheme and because it is efficient to check if a graph is bi-partite, the value \(k=2\) should be tested. Note that the brickwork graph (which is universal for quantum computations) and all dotted (edge-decorated) graphs are bipartite. In contrast, testing the case of a $3$-colourable graph is $\NPC$, so for large planar graphs the Client may have to choose a $4$-colouring instead at the expense of more repetitions of the protocol for attaining the same verification bound.

Once \(k\) is fixed, for targeted values of perceived noise level resistance, acceptance and failure probabilities, numerical optimizations can be used to determine the best values for the total number of runs \(n\), the ratio of test runs \(t/n\), and the ratio of test runs allowed to fail \(w/t\) using equation~\ref{eq:bound}.

\subsubsection{Link to Certification and Benchmarking}

Finally, we want to point out a connection between certification and verification that stems directly from the presented protocol. As mentioned in the introduction, two broad strategies can be pursued when one wants to give guarantees with respect to a device sold to clients.

On one hand, the provider and the device can be certified by a trusted third party and the provider commits to manufacture the device that has been certified. The commitment is enforced not by design but legally. This is often chosen for efficiency reasons: the certification is done once and in case of widespread services or devices, the cost (in time, money, effort) to certify it is absorbed into the volume of service or devices provided. Another reason for choosing this form of certification is that it offers a natural way to cope with imperfect devices. Most devices are certified to within some acceptable range of performance describing its nominal behaviour.

This however has caveats, as recent years have proven. For the commitment to be effective, there needs to be a reasonable chance to catch deviations from the certified behaviour which supposes in turn that devices are prevented from sensing whether they are being tested or not.\footnote{This was exactly the strategy developed by some car makers: by guessing when the engine was being run on a test-bench they would reduce its power as to pass the tests while offering a widely different behaviour when in real conditions. It was only after independent associations measured the emissions in road-like conditions and found a gap large enough that it could not be explained by variations in physical conditions that extensive search was conducted and the deviation discovered. Had the gap been smaller, it would have most certainly gone undetected.}

On the other hand, the provider can chose to opt for a verifiable device (or service). In that case the security is better as there is no commitment required, only fact-based trust dependent on a series of tests. The trouble with such strategy is - or rather was up to now - its high overhead in the context of quantum computing, making it inaccessible and extremely expensive in any foreseeable future.

Our results show that for certain classes of computations this does not need to be the case. In fact, the best of both world can be combined. Test runs are indeed probing whether our device and computation is abiding by some certification standards expressed in terms of an effective noise level. This is done continuously through the computation in order to prevent the device from adapting its behaviour. Since the computation is blind, even a fully malicious adversary cannot successfully fool the protocol. So in effect, blindness allows to combine computation and certification to arrive at verification, retaining the efficiency and imperfection tolerance of certification while keeping the unconditional security of verified schemes. While this is clearly more expensive than simple certification, this overhead should be acceptable for a wide range of practical situations. A natural open question, that we leave for further research, is whether this strategy can be extended to other protocols and if there are situations where other schemes are more efficient.

\section{Conclusion}

We have presented above a protocol that is the first of its kind with respect to its ability to withstand noise while allowing to successfully execute computations in an unconditional verifiable way with relatively modest overhead.
These capabilities are made possible thanks to the nature of the computation, its deterministic classical output combined with a classical repetition code favourably replacing more resource-demanding fault-tolerant constructions.
The obtained error-correction capability can then be used to tolerate noise while having the computation perform correctly.

\subsubsection*{Acknowledgements}

We would like to thank Tracy Northup, Anders S{\o}ndberg S{\o}rensen, and Yuxiang Zhang for fruitful discussions. 
We acknowledge support of the European Union’s Horizon 2020 Research and Innovation Program under grant agreement number 820445 (QIA). 
DL gratefully acknowledges support from the European Union’s H2020 Program under grant agreement number ERC-669891 \mbox{(Almacrypt)}, and by the French ANR Projects ANR-18-CE39-0015 \mbox{(CryptiQ)} and ANR-18-CE47-0010 \mbox{(QUDATA)}.

\bibliographystyle{apsrev}
\bibliography{q_trinity}

\appendix

\section{Useful Inequalities from Probability Theory}
\label{app:tail_bounds}
The following definitions and lemmata are useful tools for our proof.

\begin{definition}[Hypergeometric distribution \cite{wiki:hypergeometric}]
        Let $N, K, n \in \mathbb{N}$ with $0 \leq n,K \leq N$.  A random variable $X$ is said to follow the \emph{hypergeometric distribution}, denoted as $X\sim\operatorname{Hypergeometric}(N,K,n)$, if its probability mass function is described by
	\begin{align*}
		\Pr \left[ X = k \right] = \frac{\binom{K}{k}\binom{N-K}{n-k}}{\binom{N}{n}}.
	\end{align*}
	As one possible interpretation, $X$ describes the number of drawn marked items when drawing $n$ items from a set of size $N$ containing $K$ marked items, \emph{without replacement}.
\end{definition}

\begin{lemma}[Tail bound for the hypergeometric distribution \cite{wiki:hypergeometric}]
	Let $X\sim\operatorname{Hypergeometric}(N,K,n)$ be a random variable and $0 < t < K/N$. It then holds that
	\begin{align*}
	  &\Pr \left[ X \leq \left( \frac{K}{N}-t \right) n \right] \leq \exp \left( -2t^2n \right).
	\end{align*}
\end{lemma}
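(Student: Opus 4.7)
The plan is to reduce the hypergeometric tail bound to the corresponding binomial Chernoff--Hoeffding bound via Hoeffding's comparison inequality. Writing $X = \sum_{i=1}^n X_i$, where $X_i \in \{0,1\}$ indicates whether the $i$-th drawn item is marked, the $X_i$ are negatively associated rather than independent, but Hoeffding's classical 1963 result establishes that for every convex function $\phi : \mathbb{R} \to \mathbb{R}$, $\mathbb{E}[\phi(X)] \leq \mathbb{E}[\phi(Y)]$, where $Y \sim \operatorname{Binomial}(n, K/N)$ counts the marked items under sampling \emph{with} replacement. This comparison reduces any Chernoff-type moment-generating-function bound for $X$ to the same bound for $Y$.

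The next step is the standard Chernoff argument applied to the lower tail. Writing $\mu := K/N$ and picking $\lambda > 0$, the exponential Markov inequality gives
\begin{align*}
\Pr\!\left[X \leq (\mu - t)n\right]
&= \Pr\!\left[e^{-\lambda X} \geq e^{-\lambda(\mu - t)n}\right] \\
&\leq e^{\lambda(\mu - t)n} \, \mathbb{E}\!\left[e^{-\lambda X}\right].
\end{align*}
Since $x \mapsto e^{-\lambda x}$ is convex, Hoeffding's comparison yields $\mathbb{E}[e^{-\lambda X}] \leq \mathbb{E}[e^{-\lambda Y}] = \prod_{i=1}^n \mathbb{E}[e^{-\lambda Y_i}]$. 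Applying Hoeffding's lemma to the centred variables $Y_i - \mu \in [-\mu, 1-\mu]$ gives $\mathbb{E}[e^{-\lambda(Y_i - \mu)}] \leq e^{\lambda^2/8}$, so the estimates combine into
\begin{align*}
\Pr\!\left[X \leq (\mu - t)n\right] \leq \exp\!\bigl(-\lambda t n + n\lambda^2/8\bigr).
\end{align*}
Optimising at $\lambda = 4t$ delivers exactly $\exp(-2t^2 n)$, which is the claimed bound. Note that the hypothesis $0 < t < K/N$ is only needed to ensure the bound is non-trivial; the Chernoff manipulation itself is valid for all $\lambda > 0$.

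The main obstacle is Hoeffding's comparison inequality itself; the textbook argument couples the two sampling procedures and uses conditioning to show that the without-replacement distribution of the remaining draws, given the first $i$ outcomes, is stochastically more concentrated than the with-replacement one, so averaging a convex function against it can only decrease the expectation. Since this is a well-known classical result and the lemma statement is itself given with a citation, I would quote the comparison inequality rather than reproduce its proof, leaving the derivation above as a short self-contained Chernoff calculation.
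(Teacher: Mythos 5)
The paper does not prove this lemma at all---it is imported verbatim as a cited standard result (Hoeffding/Chv\'atal-type tail bound for sampling without replacement) and used as a black box. Your derivation is correct and is precisely the classical route behind that citation: Hoeffding's 1963 convex-ordering comparison between sampling without and with replacement, followed by the Chernoff--Hoeffding calculation with the optimal choice $\lambda = 4t$ giving the exponent $-2t^2n$; every step, including the moment-generating-function bound $e^{\lambda^2/8}$ from Hoeffding's lemma on $[-\mu,1-\mu]$, checks out.
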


\begin{corollary} \label{cor:lower_tail}
	Let $X\sim\operatorname{Hypergeometric}(N,K,n)$ be a random variable and $0 < \lambda < \frac{nK}{N}$. It then holds that
	\begin{align*}
		\Pr \left[ X \leq \lambda \right] \leq \exp \left( -2n \left( \frac{K}{N} - \frac{\lambda}{n} \right)^2 \right).
	\end{align*}
\end{corollary}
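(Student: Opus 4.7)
The plan is to derive the corollary as a direct reparametrisation of the tail bound stated immediately above it. The lemma bounds the probability $\Pr[X \leq (K/N - t)n]$ by $\exp(-2t^2 n)$ for every admissible $t \in (0, K/N)$. The corollary is stated in terms of a threshold $\lambda$ rather than a gap $t$, so the whole content reduces to choosing the right $t$ in terms of $\lambda$ and checking that this $t$ is still admissible.

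Concretely, I would set
\begin{equation*}
t := \frac{K}{N} - \frac{\lambda}{n},
\end{equation*}
which is exactly what makes $(K/N - t)n = \lambda$. I then need to verify the two range conditions on $t$ imposed by the lemma. The hypothesis $\lambda < nK/N$ gives $\lambda/n < K/N$ and hence $t > 0$. The hypothesis $\lambda > 0$ gives $\lambda/n > 0$, so $t = K/N - \lambda/n < K/N$. Thus $t \in (0, K/N)$ and the lemma applies.

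Plugging this $t$ back into the lemma yields
\begin{equation*}
\Pr[X \leq \lambda] \;=\; \Pr\!\left[X \leq \left(\tfrac{K}{N} - t\right) n\right] \;\leq\; \exp(-2 t^2 n) \;=\; \exp\!\left(-2n\left(\tfrac{K}{N} - \tfrac{\lambda}{n}\right)^{\!2}\right),
\end{equation*}
which is exactly the claimed bound. There is no real obstacle here: the only thing to be careful about is checking the admissibility of $t$ from the two sign conditions on $\lambda$, which is routine. The proposal is therefore essentially a one-line substitution argument, included for convenience so that later proofs can invoke the bound directly in terms of an absolute threshold $\lambda$ (as used, for instance, in the verifiability and noise-robustness proofs in the appendices).
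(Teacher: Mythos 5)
Your proof is correct and is exactly the intended derivation: the paper states the corollary without an explicit proof precisely because it follows from the preceding tail bound by the substitution $t = K/N - \lambda/n$, whose admissibility $0 < t < K/N$ is guaranteed by the hypotheses $0 < \lambda < nK/N$ just as you check. Nothing further is needed.
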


\begin{lemma}[Serfling's bound for the hypergeometric distribution \cite{Greene_2017,Serf74}]
	Let $X\sim\operatorname{Hypergeometric}(N,K,n)$ be a random variable and $\lambda > 0$. It then holds that
	\begin{align*}
		\Pr \left[ \sqrt{n} \left( \frac{X}{n} - \frac{N}{K} \right) \geq \lambda \right] \leq \exp \left( - \frac{2\lambda^2}{1-\frac{n-1}{N}} \right).
	\end{align*}
\end{lemma}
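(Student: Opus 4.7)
The plan is to follow Serfling's 1974 martingale argument, which refines Hoeffding's reduction from sampling without replacement to with-replacement sampling by extracting the finite-population correction factor $1-(n-1)/N$. We interpret the stated deviation as $\sqrt{n}(X/n - K/N)$, since $\E[X/n] = K/N$.

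First, I would realise the hypergeometric sample as the first $n$ entries $X_1,\dots,X_n$ of a uniformly random permutation of the population vector in $\{0,1\}^N$ that contains exactly $K$ ones. Writing $p = K/N$ and $S_k = \sum_{i=1}^{k} X_i$, the goal becomes to bound $\Pr[S_n - np \geq \lambda\sqrt{n}]$.

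Second, I would introduce the reverse martingale
\begin{align*}
M_k \;=\; \frac{S_N - S_k}{N-k} - p, \qquad k = 0,1,\dots,N-1,
\end{align*}
which tracks the deviation of the mean of the as-yet-unseen items from $p$. Exchangeability of a uniformly random permutation makes $(M_k)$ a reverse martingale with respect to the natural filtration; inverting the direction, it is equivalent to a forward martingale whose increments encode a Bernoulli conditional distribution whose parameter depends on the draws so far. The key identity I would exploit is that $S_n - np$ can be rewritten as a telescoping sum in the $M_k$, weighted by factors involving the remaining population size $N-k$.

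Third, I would apply the Chernoff method: for every $s > 0$,
\begin{align*}
\Pr\bigl[S_n - np \geq \lambda\sqrt{n}\bigr] \;\leq\; e^{-s\lambda\sqrt{n}}\,\E\!\left[e^{s(S_n - np)}\right].
\end{align*}
The moment-generating function is bounded by applying a Hoeffding-type inequality to each reverse-martingale increment. Because these increments have ranges proportional to $1/(N-k)$, summing their sub-Gaussian constants and retracing the telescoping identity produces a total variance proxy of the form $n(1-(n-1)/N)/4$ rather than the with-replacement value $n/4$. Optimising over $s$ then yields the claimed exponent $-2\lambda^2 / (1 - (n-1)/N)$.

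The main obstacle is extracting exactly this correction factor. A naive Azuma--Hoeffding applied to the forward increments $S_k - S_{k-1} - p$ reproduces only Hoeffding's original bound $\exp(-2\lambda^2)$, which misses the improvement entirely; the factor $1 - (n-1)/N$ must come from the reverse-martingale representation combined with a careful accounting of how the shrinking remaining population sharpens the per-step Hoeffding bounds. Verifying that the constant $2$ in the numerator of the exponent survives the Chernoff optimisation after this telescoping, rather than degrading to a smaller universal constant, is the delicate bookkeeping step of the proof.
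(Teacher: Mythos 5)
The paper does not prove this lemma; it is imported verbatim (modulo a typo) from the cited references, so there is no in-paper argument to compare against. Your sketch correctly identifies the typo --- the centering term must be $K/N$, not $N/K$ (as confirmed by how the lemma is used in Corollary~\ref{cor:upper_tail}) --- and it correctly reconstructs Serfling's martingale proof. Two small points. First, the object $M_k = \frac{S_N - S_k}{N-k} - p$, i.e.\ the running mean of the unseen items minus $p$, is in fact a \emph{forward} martingale with respect to the natural filtration $\sigma(X_1,\dots,X_k)$ (it equals $-\frac{S_k - kp}{N-k}$, which is exactly Serfling's martingale), so the detour through reverse martingales and exchangeability is unnecessary; a one-line conditional-expectation computation suffices. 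Second, the ``telescoping identity'' you invoke is simply $S_n - np = -(N-n)M_n$ together with $M_n = \sum_{k=1}^n (M_k - M_{k-1})$; the conditional range of each increment is $\frac{1}{N-k}$, so conditional Hoeffding gives $\E[e^{sM_n}] \le \exp\bigl(\tfrac{s^2}{8}\sum_{k=1}^n (N-k)^{-2}\bigr)$, and the finite-population correction emerges from the elementary bound $\sum_{k=1}^n (N-k)^{-2} \le \frac{n}{N(N-n)}$ followed by Chernoff optimisation. Carrying this out actually yields the slightly stronger exponent $\frac{2\lambda^2}{1-n/N}$, which implies the stated $\frac{2\lambda^2}{1-(n-1)/N}$, so your worry that the constant $2$ might degrade is unfounded. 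The sketch is a plan rather than a proof --- the range computation, the telescoping sum bound, and the final optimisation are asserted rather than executed --- but the approach is the standard and correct one.
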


\begin{corollary} \label{cor:upper_tail}
	Let $X\sim\operatorname{Hypergeometric}(N,K,n)$ be a random variable and $\lambda > \frac{nK}{N}$. It then holds that
	\begin{align*}
		\Pr \left[ X \geq \lambda \right] \leq \exp \left( -2n \left( \frac{\lambda}{n} - \frac{K}{N} \right)^2 \right).
	\end{align*}
\end{corollary}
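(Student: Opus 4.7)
The plan is to derive Corollary~\ref{cor:upper_tail} as a direct substitution in Serfling's bound stated immediately above it; there is essentially no creative content, only a change of variables and a weakening of the exponent's denominator. First I would rewrite the event $\{X \geq \lambda\}$ in the centered, rescaled form that appears inside Serfling's bound by setting
\[
  \lambda' \;=\; \sqrt{n}\left(\frac{\lambda}{n} - \frac{K}{N}\right).
\]
The hypothesis $\lambda > nK/N$ guarantees $\lambda' > 0$, so Serfling's bound applies and gives
\[
  \Pr\!\left[X \geq \lambda\right] \;=\; \Pr\!\left[\sqrt{n}\left(\tfrac{X}{n} - \tfrac{K}{N}\right) \geq \lambda'\right] \;\leq\; \exp\!\left(-\frac{2\lambda'^{\,2}}{1 - \frac{n-1}{N}}\right).
\]
(Here I am treating the $N/K$ appearing in the stated form of Serfling's bound as a typo for the mean $K/N$ of $X/n$, which is the standard version of the inequality.)

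Next I would weaken the bound by noting that $0 \leq n-1 < N$, so the factor $1 - (n-1)/N$ lies in $(0, 1]$. Consequently
\[
  \frac{1}{1 - \frac{n-1}{N}} \;\geq\; 1,
\]
which means $-2\lambda'^{\,2}/(1 - (n-1)/N) \leq -2\lambda'^{\,2}$, and hence the exponential is bounded above by $\exp(-2\lambda'^{\,2})$. Substituting back the definition of $\lambda'$ yields
\[
  \Pr\!\left[X \geq \lambda\right] \;\leq\; \exp\!\left(-2\lambda'^{\,2}\right) \;=\; \exp\!\left(-2n\left(\frac{\lambda}{n} - \frac{K}{N}\right)^{\!2}\right),
\]
which is exactly the claimed inequality.

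There is no real obstacle in this argument: every step is algebraic. The only subtlety worth flagging explicitly is verifying the direction of the inequality after bounding the denominator — since the exponent is negative, enlarging $1/(1-(n-1)/N)$ makes the exponent more negative and the bound larger, so the inequality is preserved in the correct sense. If one wanted to avoid relying on the (slightly mis-stated) Serfling bound entirely, a fallback would be to invoke the Hoeffding--Serfling inequality for sampling without replacement directly, but reusing the preceding lemma keeps the corollary's proof to a one-line substitution, which is all that is needed here.
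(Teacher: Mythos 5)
Your proof is correct and follows exactly the route the paper intends: the corollary is stated as an immediate consequence of Serfling's bound, obtained by the substitution $\lambda' = \sqrt{n}(\lambda/n - K/N)$ and the weakening $1/(1-\frac{n-1}{N}) \geq 1$, and you rightly read the $N/K$ in the lemma's statement as a typo for $K/N$. The only nitpick is the wording of your final ``subtlety'' remark (making the exponent more negative makes the exponential \emph{smaller}, which is why dropping the factor yields a valid, weaker upper bound), but your displayed inequality chain already gets this right.
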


Note the symmetry of Corollary~\ref{cor:lower_tail} and Corollary~\ref{cor:upper_tail}.

\begin{lemma}[Hoeffding's inequality for the binomial distribution] \label{lemma:hoeffding_binomial}
	Let $X\sim\operatorname{Binomial}(n,p)$ be a random variable. For any $k \leq np$ it then holds that
	\begin{align*}
		\Pr \left[ X \leq k \right] \leq \exp \left( -2\frac{(np-k)^2}{n} \right).
	\end{align*}
	Similarly, for any $k \geq np$ it holds that
	\begin{align*}
		\Pr \left[ X \geq k \right] \leq \exp \left( -2\frac{(np-k)^2}{n} \right).
	\end{align*}
\end{lemma}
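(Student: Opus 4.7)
The plan is to derive both tail bounds via the standard Chernoff-bound argument applied to a sum of i.i.d.\ Bernoulli variables, using Hoeffding's lemma to control the moment generating function of each summand. I would first write $X = \sum_{i=1}^n X_i$ with $X_i \sim \operatorname{Bernoulli}(p)$ independent, and focus on the upper tail $\Pr[X \geq k]$ for $k \geq np$. The lower tail will follow by symmetry: applying the upper-tail bound to $n - X \sim \operatorname{Binomial}(n, 1-p)$ at threshold $n-k$ gives exactly $\exp(-2(np-k)^2/n)$, since the squared deviation $(np-k)^2$ is invariant under $p \mapsto 1-p$, $k \mapsto n-k$.

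For the upper tail, I would apply Markov's inequality to $e^{tX}$ at any $t > 0$, yielding
\begin{align*}
\Pr[X \geq k] \leq e^{-tk}\, E[e^{tX}] = e^{-tk}\,(1-p+pe^t)^n.
\end{align*}
Then I would invoke Hoeffding's lemma on the centered summands $Y_i = X_i - p \in [-p, 1-p]$, which gives $E[e^{sY_i}] \leq \exp(s^2/8)$ because the supporting interval has length $1$. Combining with $E[e^{tX_i}] = e^{tp}\,E[e^{tY_i}]$ yields
\begin{align*}
\Pr[X \geq k] \leq \exp\!\left(-t(k-np) + \tfrac{nt^2}{8}\right).
\end{align*}
Optimizing over $t > 0$ by taking the derivative leads to $t^\ast = 4(k-np)/n \geq 0$, and substituting back produces the advertised bound $\exp(-2(k-np)^2/n)$.

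The only non-routine ingredient is Hoeffding's lemma itself, whose classical proof proceeds by using convexity of $s \mapsto e^{sy}$ to dominate it on $[a,b]$ by the affine interpolant through the endpoints, then analyzing the log-moment-generating function $\psi(s) = \log E[e^{sY_i}]$ via a second-order Taylor expansion with remainder bounded by $(b-a)^2/4$. I would invoke this as a standard textbook fact rather than reprove it, so the remaining work is purely algebraic: the exponential-moment bound, the Markov step, and the optimization in $t$ are all elementary. No concentration phenomenon beyond what Hoeffding's lemma already encodes is needed.
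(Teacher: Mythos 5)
Your proof is correct: the Markov/Chernoff step, the application of Hoeffding's lemma to the centered Bernoulli summands $Y_i = X_i - p$ on an interval of length $1$, the optimization at $t^\ast = 4(k-np)/n$, and the reduction of the lower tail to the upper tail via $n - X \sim \operatorname{Binomial}(n,1-p)$ all check out. The paper itself states this lemma without proof, treating it as a standard textbook fact alongside the hypergeometric tail bounds, so your derivation simply supplies the canonical argument that the authors implicitly rely on; there is no divergence of approach to report.
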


\section{Formal Security Definitions}
\label{app:form_def}
We model $N$-round two party protocols between players $A$ (the honest Client) and $B$ (the potentially dishonest Server) as a succession of $2N$-CPTP maps $\{\mathcal{E}_i\}_{i\in [1,N]}$ and $\{\mathcal{F}_j\}_{j\in[1,N]}$. The maps $\{\mathcal E_i\}_i$ act on $\mathcal A$, $A$'s register, and $\mathcal C$, a shared communication register between $A$ and $B$. Similarly, the maps $\mathcal \{F_j\}_j$ act on $\mathcal B$ and $\mathcal C$. Note that $\mathcal B$ and the maps $\{\mathcal F_j\}_j$ can be chosen arbitrarily by $B$ and thus, unless $B$ is specified to be behaving honestly, there is no guarantee that they are those implied by our protocol. Since we are only interested in protocols where $A$ is providing a classical input $x$, we will equivalently write the input as the corresponding computational basis state $\ket x$ used to initialize $\mathcal A$, whereas $\mathcal B$ and $\mathcal C$ are initialized in a fixed state $\ket 0$.

Below, we denote by $\Delta(\rho, \sigma) = \frac{1}{2} \|\rho-\sigma\|$, the distance on the set of density matrices induced by the trace norm $\| \rho \| = \Tr \sqrt{\rho^\dagger\rho}$. We first define $\mathcal S$ the ideal resource for verifiable delegated quantum computation and then the local-properties from \cite{DFPR14}.

\paragraph{Ideal Resource for Verifiable Delegated Quantum Computation.} 
The ideal resource $\mathcal S$ has interfaces for two parties, $A$ and $B$. The $A$-interface takes two inputs: a classical input string $x$ and the description of $\mathcal U$, the computation to perform. The $B$-interface is filtered by a bit $b$. When $b=0$, there is no further legitimate input from $B$, while for $b=1$, it is allowed to send a bit $c$ that determines the output of the computation available at $A$'s interface. When $b=0$ or $c=0$, the output at $A$'s interface is equal to $\mathcal{M}_{\mathit{Comp}} \circ \mathcal{U}(\ket x)$, where $\mathcal{M}_{\mathit{Comp}}$ is the computational basis measurement. This corresponds to a ``no cheating'' behaviour. When $c=1$, $B$ decided to cheat and $A$ receives the $\Abort$ message which can be given as a quantum state of $\mathcal A$ which is taken orthogonal to any other possible output state. At $B$'s interface, $\mathcal S$ outputs nothing for $b=0$ while for $b=1$, $B$ receives $l(\mathcal U, x)$, the permitted leakage. For generic MBQC computations, the permitted leakage is set to $G$, the graph used in the computation. When $G$ is a universal graph for MBQC computation, the permitted leakage reduces to an upper-bound on the size of the computation $\# \mathcal U$.

For this ideal resource, the blindness is an immediate consequence of the server receiving at most the permitted leak, while verifiability is a consequence of the computation being correct when the server is not cheating while being aborted otherwise.

\paragraph{$\epsilon_{\mathit{cor}}$-Local-Correctness.} 
Let $\mathcal{P}_{AB}$ be a two-party protocol as defined above with the honest CPTP maps for players A and B. We say that such a protocol implementing $\mathcal{U}$ is \emph{$\epsilon_{cor}$-locally-correct} if for all possible inputs $x$ for $A$ we have:

\begin{equation}
\label{eq:correct}
\Delta\left(\Tr_B \circ \mathcal{P}_{AB} (\ket x), \mathcal U (\ket x)\right) \leq \epsilon_{cor}
\end{equation}

\paragraph{$\epsilon_{\mathit{bl}}$-Local-Blindness.}
\label{proof:local_blindness}
Let $\mathcal{P}_{AB}$ be a two-party protocol as defined above, and where the maps $\{\mathcal E_i\}_i$ are the honest maps. We say that such protocol is \emph{$\epsilon_{\mathit{bl}}$-locally-blind} if, for each choice of $\{\mathcal F_i\}_i$ there exists a CPTP map $\mathcal F' : L(\mathcal B) \rightarrow L(\mathcal B)$ such that, for all inputs $x$ for $A$, we have:

\begin{equation}
\label{eq:blind}
\Delta\left(\Tr_A \circ \mathcal{P}_{AB} (\rho), \mathcal F' \circ \Tr_A(\ket x)\right) \leq \epsilon_{\mathit{bl}}
\end{equation}

\paragraph{$\epsilon_{\mathit{ind}}$-Independent Verification.} 
Let $\mathcal{P}_{AB}$ be a verifiable 2-party protocol as defined above, where the maps $\{\mathcal E_i\}_i$ are the honest maps. Let $\bar B$ be a qubit extending $B$'s register and initialized in $\ket 0$. Let $\mathcal{Q}_{A\bar{B}} : L(\mathcal{A}\otimes\bar{\mathcal{B}}) \rightarrow L(\mathcal{A}\otimes\bar{\mathcal{B}})$ be a CPTP map which, conditioned on $\mathcal{A}$ containing the state $\ket{\Abort}$, switches the state in $\bar{\mathcal{B}}$ from $\ket 0$ to $\ket 1$ and does nothing in the other cases. 

We say that such a protocol's verification procedure is \emph{$\epsilon_{\mathit{ind}}$-independent} from player A's input if there exists CPTP maps $\mathcal{F}'_i : L(\mathcal{C} \otimes \mathcal{B}\otimes\bar{\mathcal{B}}) \rightarrow L(\mathcal{C} \otimes \mathcal{B}\otimes\bar{\mathcal{B}})$ such that:

\begin{equation}
\label{eq:indep}
\Delta\left(\Tr_A \circ \mathcal{Q}_{A\bar{B}} \circ \mathcal{P}_{AB}(\rho), \Tr_A \circ \mathcal{P}'_{AB\bar{B}}(\rho)\right) \leq \epsilon_{\mathit{ind}}
\end{equation}

where

\[
\mathcal{P}'_{AB\bar{B}} := \mathcal{E}_1 \circ \mathcal{F}'_1 \circ \ldots \circ \mathcal{E}_n \circ \mathcal{F}'_n
\]

\paragraph{$\epsilon_{\mathit{ver}}$-Local-Verifiability.} 
Let $\mathcal{P}_{AB}$ be 2-party protocols as defined above where the maps for $A$ are the honest maps, while the maps $\{\mathcal F_j\}_j$ for $B$ are not necessarily corresponding to the ideal (honest) ones. Let $x$ be the input given by $A$ in the form of a computational state $\ket x$ and $\mathcal U$ the computation it wants to perform. The protocols $\mathcal P_{AB}$ are \emph{$\epsilon_{\mathit{ver}}$-locally-verifiable} for $A$ if for each choice of CPTP maps $\{\mathcal F_j\}_{j}$, there exists $p \in [0,1]$ such that we have:

\[
\Delta\Bigl(\tr_B \mathcal P_{AB}(\ket x), p \mathcal U(\ket x) + (1-p) \ketbra{\Abort}) \leq \epsilon_{\mathit{ver}}
\]

\section{Proof of Perfect Local-Blindness}
\label{app:proof_lb}
\begin{proof}
To prove that Equation~\ref{eq:blind} holds for $\epsilon_{\mathit{bl}} = 0$, first note that at the end of our protocol, the Client $A$ reveals to the Server $B$ whether the computation was accepted or  aborted.
Hence, each case can be analyzed separately.
Second, we show that the interrupted runs that have triggered a $\Redo$ can be safely ignored.
Indeed, each one of them is the begining of an interrupted UBQC computation, and, because UBQC is composable and perfectly blind \cite{DFPR14}, no information can leak to the Server through the transmitted qubits.
In addition, our protocol restricts the honest party $A$ in its ability to emit $\Redo$ requests, so that no correlations are created between the index of the interrupted runs and $\mathcal U$ or the secret random parameters used in the runs (angle and measurement padding, and trap preparations).
As a consequence, from the point of view of $B$, the state of the interrupted runs is completely independent of the state of the non-interrupted ones and does not contain information regarding the input, computation or secret parameters.
That is, its partial trace over $A$ can be generated by $B$ alone.

For the non-interrupted runs, we can invoke the same kind independence argument between the computation runs and the test runs.
As a result blindness of our protocol stems from the blindness of the underlying computation runs.
In case the full protocol is a success, we can rely on the composability of the perfect blindness of each UBQC computation run to have perfect local-blindness.
For an abort, we can consider a situation that is more advantageous for $B$ by supposing that alongside the $\Abort$ message sent by $A$, it also gives away the location of the trap qubits.
In this modified situation, the knowledge of the computation being aborted does not bring additional information to $B$ as it only reveals that one of the attacked position was a trap qubit, which $B$ now already knows.
Using our independence argument between trap location on the one hand and the inputs, computation and other secret parameters, we conclude that revealing the location of the trap qubits does not affect the blindness of the computation runs.
Hence, using composability again and combining the abort and accept cases, we arrive at Equation~\ref{eq:blind} with $\epsilon_{\mathit{bl}} = 0$.

\end{proof}

\section{Proof of Verifiability}
\label{app:proof_ver}
\begin{proof}[Proof of Lemma~\ref{thm:verif}]

  We will take a direct approach for proving Lemma~\ref{thm:verif} by bounding the probability of yielding a wrong output while not aborting. To do so, we consider the state of the {\em combined computation} as if it was a single verified computation and not made of separate sequential runs. Once again, in our protocol, because the parties can only ask for redoing a run independently of the input, computation, used randomness and of the output of the computation itself (comprising the result of trap measurements), interrupted runs can be safely ignored in the verification analysis as the state corresponding to these runs is uncorrelated to that of the completed runs. The combined computation view will be useful as we want to consider the Server $B$ performing any kind of attack. In particular, it could decide to perform some action on a qubit given measurements in one or several of the underlying runs, or to entangle the various underlying runs together. Yet, for each qubit of the combined computation, we will continue to refer to the underlying run this qubit would belong to if the computation was done using sequential runs.

\paragraph{Output of the combined computation.}
We first consider the density operator $B(\{\mathcal F_j\}_j, \nu)$ that corresponds to the classical messages the Client $A$ receives during its interaction with $B$, comprising the final message containing the encrypted measurement outcomes. Below, the CPTP maps $\{\mathcal F_j\}_j$ represent the chosen behavior of $B$ and possibly act on the combined computation as a whole, and not only run by run. By representing the classical messages as quantum states in the computational basis, we can always write:
\begin{align}
\label{eq:output}
& B(\{\mathcal F_j\}_j,\nu) = \Tr_B \bigg\{ \sum_b \ketbra{b+c_r}{b} \mathcal F \mathcal{P} \nonumber \\
& \quad \left(\ketbra{0}_B \otimes \ketbra{\Psi^{\nu,b}}\right) \mathcal{P}^\dagger {\mathcal F}^\dagger  \ketbra{b}{b+c_r}\bigg\}
\end{align}
where \(b\) is the list of measurement outcomes defining the computation branch; \(\nu\) is a composite index relative to the secret parameters chosen by $A$, i.e.~the type of each underlying run, the padding of the measurement angles and measurements outcomes and the trap setup; \(\ketbra{b+c_r}{b}\) ensures that only the part corresponding to the current computation branch is taken into account and removes the One-Time-Pad encryption on non-output and non-trap qubits while leaving output and trap qubits unaffected, i.e.~encrypted; \(\ketbra{0}_B\) is some internal register for $B$ in a fixed initial state; and \(\ket{\Psi^{\nu,b}}\) is the state of the qubits sent by $A$ to $B$ at the beginning of the protocol tensored with quantum states representing the measurement angles of the computation branch $b$.

To obtain this result, we can follow the line of proof of~\cite{fk} and~\cite{KW15} applied to the combined computation. This works by noting that for a given computation branch $b$ and given random parameters $\nu$, all the measurement angles are fully determined. Therefore, provided that the computation branch is $b$, we can include the measurement angles into the initial state. This defines $\ket{\Psi^{\nu,b}}$. Then, each $\mathcal F_j$ is decomposed into an honest part and a pure deviation. All the deviations are commuted and collected into $\mathcal F$ applied after $\mathcal P$, the unitary part of honest protocol, is applied. The projections onto $\ket b$ then ensures that after the deviation induced by $B$ the perceived computation branch is $b$. This, together with the decrypting of non-output non-trap qubits, gives Equation~\ref{eq:output}.

\paragraph{Probability of failure.}
A failure for the combined computation occurs when the result after the majority vote is incorrect while the computation is accepted.

The combined computation being deterministic, we can define \(P_\perp\), the projector onto the subspace of incorrect states for the output qubits before the majority vote. Yet, for the combined computation to be accepted, no more than \(w\) test runs have a trap qubit measurement outcome opposite to what was expected. Let \(\fT\) denote the set of trap qubits which is determined by $T$, the set of test runs, and the type of each test run. In absence of any deviation on the combined computation, their expected value is \(\ket{r_\fT} = \bigotimes_{\ft \in \fT}\ket{r_\ft}\) where \(r_{\fT} = (r_{\ft})_{\ft \in \fT}\) denotes the measurement outcome padding values restricted to trap qubits. Therefore, the projector onto the states of the trap qubits yielding to an accepted combined computation can be written as \(\sum_{\fw \in \fW} X_{\fT}^{\fw} \ketbra{r_{\fT}} X_{\fT}^{\fw}\) with \(X_{\fT}^{\fw} = \bigotimes_{\ft\in \fT} X_{\ft}^{\fw_{\ft}}\), and where \(\fW\) is the set of length \(|\fT|\) binary vectors \(\fw\) that have at least a one in no more than \(w\) underlying (test) runs. Combining the projector onto incorrect output and the one for accepted computation, we obtain the probability of failure:

\begin{align*}
& \Pr(\mathrm{fail}) = \\
& \quad \sum_{\nu}\sum_{\fw \in \fW}\sum_{b,k,\sigma,\sigma'} \Pr(\nu) \Tr \Big \{ \left(P_\perp \otimes  X_{\fT}^{\fw} \ketbra{r_{\fT}} X_{\fT}^{\fw} \right) \times \\
& \quad \left( \alpha_{k\sigma}\alpha^*_{k\sigma'} \ketbra{b+c_r}{b} \sigma \mathcal{P}\ketbra{\Psi^{\nu, b}} \mathcal{P}^\dagger \sigma' \ketbra{b}{b+c_r}\right) \Big\}  
\end{align*}
where \(\mathcal F\) has been decomposed into Kraus operators indexed by \(k\), that were in turn decomposed onto the Pauli basis through the coefficients \(\alpha_{k\sigma}\) and \(\alpha_{k\sigma'}\). Consequently, \(\sigma\) and \(\sigma'\) are Pauli matrices.

\paragraph{Necessary condition for failure.}
The difficulty with the above expression for the probability of failure consists in determining the exact form of \(P_\perp\) and manipulating it. Instead, we will derive a coarse necessary condition for the final state of the non-trap qubits to be in the subspace defined by \(P_\perp\). Then, we will upper bound \(\Pr(\mathrm{fail})\) by evaluating the probability of satisfying our necessary condition while accepting the whole computation.

First, note that the output of the computation being classical and deterministic, we can write the correct decrypted output state as \(\ketbra{s_{\fO}}\) for some length \(|\fO|\) binary vector \(s_{\fO}\) over the set of output qubit positions \(\fO\) of the combined computation. Next, as for the trap qubits, the value sent to the Client is One-Time-Padded by the value of the random parameter \(r_{\fO}\) to preserve blindness of the Server (i.e.~\(c_r\) is 0 for output qubits). Hence, the state of the output qubits received by the Client in absence of deviation is \(\ketbra{s_{\fO} + r_{\fO}}\).

Now, because the result of the computation is the majority vote of the measurement outcomes for the output qubit for each underlying computation run, each result bit is protected by a length \(d\) repetition code. All attacks resulting in less than \(d/2\) non-trivially affected underlying computational runs will be corrected. Conversely, for a failure to happen, it is necessary that at least \(d/2\) underlying computation runs are non-trivially affected by the attack \(\Omega\). This means that the subspace stabilized by \(P_\perp\) is also stabilized by the coarser projection operator \(\sum_{\fv\in \fV} X_{\fO}^{\fv} \ketbra{s_{\fO} + r_{\fO}} X_{\fO}^{\fv}\), where \(\fV\) is the set of binary vectors over \(\fO\) with at least one non zero position in at least \(d/2\) underlying (computational) runs. As a consequence, the latter projector can be used to replace \(P_\perp\) which yields an upper bound on \(\Pr(\mathrm{fail})\):

\begin{align*}
& \Pr(\mathrm{fail}) \leq \\
& \quad \sum_{\nu}\sum_{\fv \in \fV, \fw \in \fW}\sum_{b',k,\sigma,\sigma'} \Pr(\nu) \alpha_{k\sigma}\alpha^*_{k\sigma'} \Tr \Bigg\{ \\
& \quad (X_{\fO}^{\fv}\otimes X_{\fT}^{\fw}) \left(\ketbra{s_{\fO}+r_{\fO}} \otimes \ketbra{r_{\fT}}\right) (X_{\fO}^{\fv} \otimes X_{\fT}^{\fw}) \times \\
& \quad \ketbra{b+c_r}{b} \sigma \mathcal{P}\ketbra{\Psi^{\nu, b}} \mathcal{P}^\dagger \sigma' \ketbra{b}{b+c_r} \Bigg\}
\end{align*}
where \(b'\) is the binary vector obtained from \(b\) by restricting it to non-output, non-trap qubits (i.e.~\(b = (b_O, b_T, b')\). The above equation is obtained from the simple equality \(\sum_{b} ( \bra{s_{\fO} + r_{\fO}}\otimes\bra{r_{\fT}}) (X_{\fO}^{\fv}\otimes X_{\fT}^{\fw}) \ketbra{b+c_r}{b} = \sum_{b'} (\ket{s_{\fO} + r_{\fO}}\otimes \ket{r_{\fT}}\otimes \ket{b'}) (X_{\fO}^{\fv} \otimes X_{\fT}^{\fw})\) since \(c_r = 0\) for output and trap qubits, and the circularity of the trace.

\paragraph{Using blindness of the scheme.}
The design of the protocol yielding the combined computation ensures blindness. This implies that the resulting state of any set of qubits after applying \(\mathcal P\) and taking the average over their possible random preparations parameters is a completely mixed state. This can be applied in the above equation for the set of non-output and non-trap qubits. For output and trap qubits, we need first to compute inner products before taking the sum over their random preparation parameters \(\nu_{\fO}\) and \(\nu_{\fT}\) respectively. However, we know that the perfect protocol produces the traps in their expected states \(\ket{s_\fo + r_\fo}\) and \(\ket{r_\ft}\). This gives, using the circularity of the trace:

\begin{align*}
\Pr(\mathrm{fail}) \leq & 
\sum_{\nu_{\fO},\nu_{\fT}}\sum_{\fv \in \fV, \fw \in \fW}\sum_{b',k,\sigma, \sigma'} \Pr(\nu_{\fO},\nu_{\fT}) \alpha_{k\sigma}\alpha^*_{k\sigma'}  \times \bigg\{ \\
& \quad  \bra{s_{\fO}+r_{\fO}}  \otimes \bra{r_{\fT}} \otimes \bra{b'} (X_{\fO}^{\fv} \otimes X_{\fT}^{\fw}) \times\\
& \quad \sigma  \left(\ketbra{s_{\fO} + r_{\fO}} \otimes \ketbra{r_{\fT}} \otimes \frac{\Id}{\Tr \Id} \right) \sigma' \times \\
& \quad  (X_{\fO}^{\fv} \otimes X_{\fT}^{\fw}) \ket{s_{\fO}+r_{\fO}} \otimes \ket{r_{\fT}}\otimes \ket{b'} \bigg\}
\end{align*}

Since the Pauli matrices are traceless, this imposes \(\sigma_l = \sigma'_l\) for \(l\notin \fO \cup \fT \), where subscript \(l\) is used to select the action of \(\sigma\) and \(\sigma'\) on qubit \(l\). For the output qubits, \(\sum_{r_{\fo}} \bra{s_{\fo}+r_{\fo}} X_{\fo}^{\fv_{\fo}} \sigma_{\fo} \ketbra{s_{\fo}+r_{\fo}} \sigma'_{\fo} X_{\fo}^{\fv_{\fo}} \ket{s_{\fo}+r_{\fo}}\) vanishes for \(\sigma_{\fo} \neq \sigma'_{\fo}\) and similarly for the traps, \(\sum_{r_{\ft}} \bra{r_{\ft}} X_{\ft}^{\fw_{\ft}} \sigma_{\ft} \ketbra{r_{\ft}} \sigma'_{\ft} X_{\ft}^{\fw_{\ft}} \ket{r_{\ft}}\) vanishes for \(\sigma_{\ft} \neq \sigma'_{\ft}\). Hence we get: 
\begin{align*}
\Pr(\mathrm{fail})
& \leq \sum_{\nu_{\fO},\nu_{\fT}}\sum_{\fv \in \fV, \fw \in \fW}\sum_{k,\sigma}\Pr(\nu_\fO,\nu_\fT) |\alpha_{k\sigma}|^2  \times \\
& \qquad \prod_{\fo \in \fO} |\bra{s_\fo+r_\fo} X_\fo^{\fv_\fo} \sigma_\fo \ket{s_\fo+r_\fo}|^2 \times \\
& \qquad \prod_{\ft\in \fT} |\bra{r_\ft} X_\ft^{\fw_\ft} \sigma_\ft \ket{r_\ft}|^2 \\
& \leq \sum_{k}\sum_{\sigma} |\alpha_{k\sigma}|^2 f(\sigma) 
\end{align*}
with 
\begin{align}\label{eq:f}
f(\sigma) =  & \sum_{\nu_\fO,\nu_\fT} \sum_{\fv\in \fV, \fw\in \fW}\Pr(\nu_\fO, \nu_\fT) \times  \nonumber \\
& \quad \prod_{\fo\in \fO} |\bra{s_\fo+r_\fo} X_\fo^{v_\fo} \sigma_\fo \ket{s_\fo+r_\fo}|^2 \times \nonumber \\
& \quad \prod_{\ft\in \fT} |\bra{r_\ft} X_\ft^{\fw_\ft} \sigma_\ft \ket{r_\ft}|^2 
\end{align}

\paragraph{Worst case scenario (for the upper-bound).}
The worst case scenario corresponds to maximizing the bound on \(\Pr(\mathrm{fail})\). Since we have \(\sum_{k, \sigma} |\alpha_{k\sigma}|^2 = 1\), our bound is worst when \(\alpha_{k\sigma} = 1\) for \(\sigma\) such that \(f(\sigma)\) is maximum. In this case we get: 
\[
\Pr(\mathrm{fail}) \leq \max_{\sigma} f(\sigma)
\]

\paragraph{Simplified expression for the bound.}
Given our protocol, a global trap and output qubit configuration \(\nu_\fO,\nu_\fT\) is defined by (i) the set \(\fT\) of trap qubits, itself entirely determined by the position and kind of test runs within the sequence of runs, and (ii) the preparation parameters \(\theta_l\) and \(r_l\) of each trap and output qubits. Each parameter of (i) and (ii) being chosen independently, the probability of a given configuration \(\nu_\fO, \nu_\fT\) can be decomposed into the probability \(\Pr(\fT)\) for a given configuration of trap locations multiplied by the probability of a given configuration for the prepared state of the trap and output qubits, \(\prod_{l\in \fO \cup \fT}  \sum_{\theta_l,r_l}\Pr(\theta_l, r_l)\). Using this, one can rewrite \(f(\sigma)\):
\begin{align*}
f(\sigma) = & \sum_{\fT} \sum_{\fv\in \fV, \fw\in \fW} \Pr(\fT) \times \\
& \quad \prod_{\fo\in \fO} \sum_{\theta_\fo, r_\fo} \Pr(\theta_\fo, r_\fo) |\bra{s_\fo + r_\fo} X_\fo^{v_\fo} \sigma_\fo \ket{s_\fo + r_\fo}|^2 \times \\
& \quad \prod_{\ft\in \fT} \sum_{\theta_\ft,r_\ft}\Pr(\theta_\ft, r_\ft) |\bra{r_\ft} X_\ft^{\fw_\ft} \sigma_\ft \ket{r_\ft}|^2 
\end{align*}

Now, let \(\sigma\) be a maximizing attack and denote by \(\sigma_{|X}\) the binary vector indexed by qubit positions of the combined computation where ones mark qubit positions for which \(\sigma\) acts as \(X\) or \(Y\). In the following, we allow \(\fO\) to also denote the  binary vector over qubit positions of the combined computation where ones are positioned for qubits in \(O\), and similarly for \(\fT\). Using the fact that \(|\bra{s_\fo+r_\fo} X_\fo^{\fv_\fo} \sigma_\fo \ket{s_\fo+r_\fo}|^2\) is 1 for \(X_\fo^{v_\fo} \sigma_\fo \in \{I,Z\}\) and 0 otherwise, we obtain that
\begin{align*}
& \prod_{\fo\in \fO}\sum_{\theta_\fo,r_\fo}\Pr(\theta_\fo, r_\fo) |\bra{s_\fo+r_\fo} X_\fo^{v_\fo} \sigma_\fo \ket{s_\fo+r_\fo}|^2 \\
& \quad =
\begin{cases}
1 & \mbox{for } \fO.\sigma_{|X} = \fv \\
0 & \mbox{otherwise} 
\end{cases}
\end{align*}
Where, for \(a\) and \(b\) binary vectors, \(a.b\) is the bit-wise binary product vector. We obtain a similar expression for the trap qubits. Inserting these expressions in Equation~\ref{eq:f} we obtain that for the attack to be successful, it must affect in a non-trivial way at least \(d/2\) computation runs, and at most \(w\) test runs. The probability of failure can thus be rewritten as: 
\[
\Pr(\mathrm{fail}) \leq \max_{\sigma} \sum_{\fT\in \Upsilon_\sigma} \Pr(\fT)
\]
where \(\Upsilon_\sigma\) are configurations where the binary vector \(\sigma_{|X}\) has ones in at most \(w\) test runs and has ones on at least \(d/2\) computation runs.

\paragraph{Closed form upper bound.}
Now, assume that the maximum above is attained for some $\sigma$ that happen to affect one of the run, say \(k\), on more than one qubit. Consider $\sigma'$ with the sole difference to $\sigma$ that only one of the qubits in run \(k\) is affected by the attack. Because run \(k\) is still non trivially affected by \(\sigma\) and \(\sigma'\), we conclude that all configurations \(\fT\) in \(\Upsilon_\sigma\) are also in \(\Upsilon_{\sigma'}\). Therefore 
\begin{align*}
	\Pr(\mathrm{fail}) \leq \max_m \max_{\sigma \in E_{m}} \sum_{T\in\Upsilon_\sigma} \Pr(T).
\end{align*}
where $E_{m}$ denotes the set of Pauli operators with $m$ single qubit deviations all in distinct runs. Note that the parameter $m$ and the locations of the attacks within each run describe the adversary's strategy.

Additionally, since the random choice of test runs is completely uniform, the term $\sum_{T\in\Upsilon_\sigma} \Pr(T)$ is invariant under permutations of the test and computation runs. We can hence restrict the range of the maximum to the specific Pauli operators $\sigma_{m}$ with a deviation on a single qubit in each of the first $m$ runs:
\begin{align}
	\Pr(\mathrm{fail}) \leq \max_m \sum_{T\in\Upsilon_{\sigma_{m}}} \Pr(T).\label{eq:upper_bound}
\end{align}

\begin{figure*}[ht]
  \centering
  \begin{minipage}{0.9\textwidth}
  	\includegraphics[]{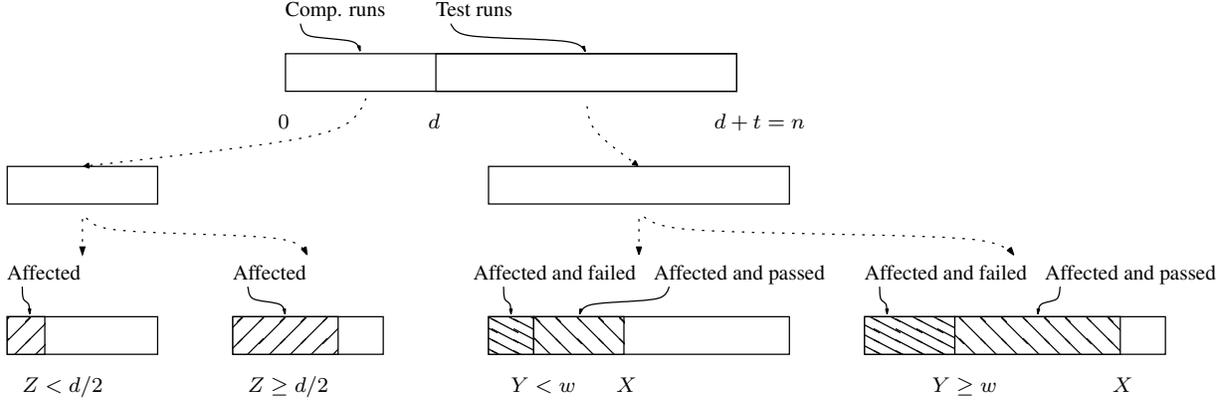}
  \end{minipage}
  \caption{The four cases needed to determine a closed form upper bound for the probabiliity of failure. First, we determine the probability for the number of affected computation runs. If it is low enough ($Z < d/2$), no need to abort. If it is high ($Z \geq d/2$), we find a bound on the probability that the number of failed test runs $Y$ is below or above $w$.}
  \label{fig:vbqc}
\end{figure*}

\paragraph*{Formal bound.}
To find a closed form upper bound for the right-hand side of Equation~\ref{eq:upper_bound} we distinguish two regimes (see Figure~\ref{fig:vbqc}):
\begin{enumerate}
	\item For $m \geq \left( \frac{1}{2} - \varphi \right) n$ with some (small) $\varphi > 0$, we upper the right-hand side with the probability that $\sigma_m$ triggers not more than $w$ traps, thus ignoring the condition that \(\sigma_m\) affects more than \(d/2\) computation runs. This again is done in two steps:
		\begin{enumerate}
			\item Find a lower bound (with high probability) on the number of affected test runs.
			\item Use this number to find a lower bound on the number of triggered traps.
		\end{enumerate}
	\item For $m \leq \left( \frac{1}{2} - \varphi \right) n$, we show an upper bound on the probability that $\sigma_m$ manages to affect at least $d/2$ computation runs, then ignoring the condition that \(\sigma_m\) affects less than \(w\) traps.
\end{enumerate}

As a first step, we derive a bound for the probability of the random variable $X$ counting the test runs affected by the server's deviation to be smaller than a given threshold $\frac{m}{n} -\epsilon_1$. Because $X$ is $(n, t, m)$-hypergeometrically distributed, for all  $\varepsilon_1 > 0$, we can apply Corollary~\ref{cor:lower_tail} to obtain:
\begin{align}
	\Pr \left[ X \leq \left( \frac{m}{n} - \varepsilon_1 \right) t \right] \leq \exp \left( -2 \frac{t^2}{m} \varepsilon_1^2 \right). \label{eqn:upper_bound_X}
\end{align}
In other words, this inequality means that with high probability, the attack will affect at least $\left( \frac{m}{n} - \varepsilon_1 \right) t$ test runs.

As the next step, we can derive from here a bound on the probability that the random variable $Y$ is below some threshold, where $Y$ describes the number of failed test runs, i.e. the number of affected test runs where the deviation hits a trap. Since the type of test runs is sampled independently from the location of test runs, $Y$ conditioned on the event that $X = \left( \frac{m}{n} - \varepsilon_1 \right) t$ follows a $(\left( \frac{m}{n} - \varepsilon_1 \right) t, 1/k)$-binomial distribution. Let $\varepsilon_2 > 0$. Applying Lemma~\ref{lemma:hoeffding_binomial}, we arrive at
\begin{align*}
	&\Pr \left[ \left. Y \leq \left( \frac{1}{k} - \varepsilon_2 \right) \left( \frac{m}{n} - \varepsilon_1 \right) t \,\right|\, X = \left( \frac{m}{n} - \varepsilon_1 \right) t \right] \\
	&\leq \exp \left( -2t \left( \frac{m}{n} - \varepsilon_1 \right) \varepsilon_2^2 \right).
\end{align*}
Combining the previous expressions, we obtain
\begin{align*}
	&\Pr \left[ Y \leq \left( \frac{1}{k} - \varepsilon_2 \right) \left( \frac{m}{n} - \varepsilon_1 \right) t \right] \\
	&= \Pr \left[ \left. Y \leq \left( \frac{1}{k} - \varepsilon_2 \right) \left( \frac{m}{n} - \varepsilon_1 \right) t \,\right|\, X < \left( \frac{m}{n} - \varepsilon_1 \right) t \right] \\
	&\qquad \cdot \Pr \left[ X < \left( \frac{m}{n} - \varepsilon_1 \right) t \right] \\
	&\quad + \Pr \left[ \left. Y \leq \left( \frac{1}{k} - \varepsilon_2 \right) \left( \frac{m}{n} - \varepsilon_1 \right) t \,\right|\, X \geq \left( \frac{m}{n} - \varepsilon_1 \right) t \right] \\
	&\qquad \cdot \Pr \left[ X \geq \left( \frac{m}{n} - \varepsilon_1 \right) t \right]. \\
\end{align*}
Note that decreasing $X$ also makes $Y$ smaller. This is to be understood in the following way: $Y$ is dependent on $X$. For all $x_1 \leq x_2$ and for all $y$ it holds that $\Pr [ Y \leq y | X = x_1 ] \geq \Pr [ Y \leq y | X = x_2 ]$. In other words, $Y$ conditioned on $X=x_1$ is less than $Y$ conditioned on $X=x_2$ in the \emph{usual stochastic order}. Applying Inequality~\eqref{eqn:upper_bound_X} and exploiting this fact, we continue upper-bounding the previous expression by
\begin{align*}
	&\leq \exp \left( -2 \frac{t^2}{m} \varepsilon_1^2 \right) \\
	&\quad + \Pr \left[ \left. Y \leq \left( \frac{1}{k} - \varepsilon_2 \right) \left( \frac{m}{n} - \varepsilon_1 \right) t \,\right|\, X = \left( \frac{m}{n} - \varepsilon_1 \right) t \right] \\
	&\leq \exp \left( -2 \frac{t^2}{m} \varepsilon_1^2 \right) + \exp \left( -2t \left( \frac{m}{n} - \varepsilon_1 \right) \varepsilon_2^2 \right).
\end{align*}
As the final step, let $Z$ be a random variable describing the number affected computation runs. Note that $Z$ is $(n, d, m)$-hypergeometrically distributed. We are then interested in the following probability which we can upper-bounded using Corollary~\ref{cor:upper_tail} for $\frac{m}{n} < \frac{1}{2}$:
\begin{align*}
	\Pr \left[ Z \geq \frac{d}{2} \right] \leq \exp \left( -2 \left( \frac{1}{2} - \frac{m}{n} \right)^2 \frac{d^2}{m} \right).
\end{align*}
To compute the probability of failure, we equate
\begin{align*}
	w = \left( \frac{1}{k} - \varepsilon_2 \right) \left( \frac{m}{n} - \varepsilon_1 \right) t
\end{align*}
and conclude that
\begin{align*}
	&\Pr \left[ \mathrm{fail} \right] \leq \max_m \sum_{T\in\Upsilon_{\sigma_{m}}} \Pr(T) = \max_m \Pr \left[ Y \leq w \, \wedge \, Z \geq \frac{d}{2} \right] \\
	&\leq \max \left\{ \max_{m \leq \left( \frac{1}{2} - \varphi \right)n} \Pr \left[ Z \geq \frac{d}{2} \right], \max_{m \geq \left( \frac{1}{2} - \varphi \right) n} \Pr \left[ Y \leq w \right] \right\}.
\end{align*}
Both inner maximums are attained for $m = \left( \frac{1}{2} - \varphi \right)n$. Hence, we can upper-bound the preceding terms by
\begin{align}
	&\leq \max \left\{ \exp \left( -2 \cdot \frac{\varphi^2}{1/2-\varphi} \cdot \frac{d^2}{n} \right) ,\, \right. \nonumber \\
	&\left. \exp \left( -2 \frac{t^2}{\left( \frac{1}{2} - \varphi \right)n} \varepsilon_1^2 \right) + \exp \left( -2t \left( \frac{1}{2} - \varphi - \varepsilon_1 \right) \varepsilon_2^2 \right) \right\} \label{eq:bound},
\end{align}
where
\begin{align*}
	w/t = \left( \frac{1}{k} - \varepsilon_2 \right) \left( \frac{1}{2} - \varphi - \varepsilon_1 \right),
\end{align*}
and
\begin{align*}
	&0 < \varepsilon_1 < \frac{1}{2}, \\
	&0 < \varepsilon_2 < \frac{1}{k}, \\
	&0 < \varphi < \frac{1}{2} - \varepsilon_1.
\end{align*}
To obtain an optimal bound, this expression must be minimized over $\varepsilon_1$, $\varepsilon_2$, and $\varphi$.

Irrespective of the exact form of the optimal bound, choosing $\varphi$, $\varepsilon_1$, and $\varepsilon_2$ sufficiently small implies the existence of protocols with verification exponential in $n$, for any fixed $0 < w/t < \frac{1}{2k}$ and fixed $\frac{d}{n}, \frac{t}{n} \in (0,1)$.
\end{proof}

\section{Proof of Noise-Robustness}
\label{app:proof_rob}
In the following, we define the constant ratios of test, computation and tolerated failed test runs as $\delta := d/n$, $\tau := t/n$ and $\omega := w/t$.

\begin{theorem}[Local-Correctness of VDQC Protocol on Noisy Devices]
\label{thm:correctness_form}
Assume a Markovian round-dependent model for the noise on Client and Server devices and let $\pmin \leq \pmax < 1/2$ be respectively a lower and an upper-bound on the probability that at least one of the trap measurement outcomes in a single test round is incorrect.

If $\omega > \pmax$, Protocol~\ref{prot:MQ-VBQC} is $\epsilon_{\mathit{cor}}$-locally-correct with exponentially-low $\epsilon_{\mathit{cor}}$:
\begin{align*}
	\epsilon_{\mathit{cor}} = &\exp \left( -2 (\omega - \pmax)^2 \tau n \right) \\
	&\quad+ \exp \left( -2 \left(\frac{1}{2} - \pmax \right)^2 \delta n \right).
\end{align*}

On the other hand, if $\omega < \pmin$, then the Client's acceptance probability in Protocol~\ref{prot:MQ-VBQC} is exponentially-low $\exp \left( -2 (\pmin - \omega)^2 \tau n \right)$.

\end{theorem}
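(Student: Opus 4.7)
The plan is to reduce both statements to Hoeffding's inequality applied to independent Bernoulli trials, one per run. The Markovian run-dependent noise assumption guarantees that the errors on distinct runs are independent, and the fact that the partition $(C,T)$ is drawn uniformly at random and independently of the devices' behaviour decouples the randomness of the partition from the noise. It is therefore natural to work with the two random variables $F_T$, counting the number of failed test runs, and $F_C$, counting the number of computation runs whose decoded output disagrees with the noiseless one.

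I would first argue that, by blindness, every run looks identical to the Server and hence to the noise maps it applies, up to the Client's secret parameters. Combined with the hypothesis that any single test run fails with probability in $[\pmin, \pmax]$, this yields that $F_T$ is stochastically sandwiched between $\mathrm{Binomial}(t, \pmin)$ and $\mathrm{Binomial}(t, \pmax)$. Adopting the conservative convention that a single qubit error in a computation run suffices to corrupt its decoded output, the probability that any given computation run is corrupted is also bounded by $\pmax$, and therefore $F_C$ is stochastically dominated by $\mathrm{Binomial}(d, \pmax)$.

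For the regime $\omega > \pmax$, the Client fails to be locally correct only if (i) the protocol aborts, which requires $F_T \geq w$, or (ii) a strict majority of computation runs is corrupted, which requires $F_C \geq d/2 > d\pmax$. A union bound then gives
\begin{align*}
    \epsilon_{\mathit{cor}} \leq \Pr[F_T \geq w] + \Pr[F_C \geq d/2],
\end{align*}
and applying Lemma~\ref{lemma:hoeffding_binomial} to each summand (both thresholds lying above the mean of the dominating binomial) produces the two exponentials $\exp(-2(\omega - \pmax)^2 \tau n)$ and $\exp(-2(1/2 - \pmax)^2 \delta n)$ stated in the theorem. For the regime $\omega < \pmin$, acceptance requires $F_T \leq w$ while $F_T$ stochastically dominates a binomial whose mean $\tau n \pmin$ exceeds $w$; the lower tail of Hoeffding's inequality then yields $\exp(-2(\pmin - \omega)^2 \tau n)$.

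The main obstacle is the rigorous justification that the corruption probability of a computation run is bounded by $\pmax$. The quantity $\pmax$ is defined through trap failures in test runs, so transferring it to a bound on computation-run corruption requires invoking blindness---since from the viewpoint of the noise maps the run types are indistinguishable---together with the conservative \emph{one-error-is-enough} convention used in the proof sketch of Theorem~\ref{thm:correctness}. Making this translation precise, and verifying that the run-dependent Markovian assumption genuinely delivers independence across the random partition $(C,T)$, is the only non-mechanical step; everything else is a direct invocation of the concentration inequalities collected in Appendix~\ref{app:tail_bounds}.
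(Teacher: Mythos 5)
Your proposal is correct and follows essentially the same route as the paper's proof in Appendix~\ref{app:proof_rob}: stochastic domination of the failed-test-run and corrupted-computation-run counts by $(t,\pmax)$-, $(t,\pmin)$- and $(d,\pmax)$-binomials, followed by Hoeffding's inequality (Lemma~\ref{lemma:hoeffding_binomial}), with the transfer of $\pmax$ to computation runs justified by the one-error-is-enough convention exactly as the paper does. The only cosmetic difference is that you use a union bound on the two failure events where the paper multiplies the two independent success probabilities and expands $(1-\epsilon_Z)(1-\epsilon_{Y,1}) \geq 1-(\epsilon_Z+\epsilon_{Y,1})$, which yields the identical bound.
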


\begin{proof}

We define random variables $Y$ that corresponds to the number of failed test runs during one execution of the protocol, and $Z$ counting the number of affected computation runs (where at least one bit of output is flipped). We call $\ok$ the event that the Client accepts at the end of the protocol - if not many test runs fail, meaning that $Y < w$ - and $\Correct$ the event corresponding to a correct output - if few of the computation runs have their output bits flipped and therefore $Z < d/2$.

\paragraph{For $\omega > \pmax$.} Equivalently, we have that $w > t\pmax$. We are looking to lower-bound the probability of an honest run producing the correct outcome and not aborting:

\begin{align*}
\Pr \left[\Correct \land \ok \right] &= \Pr \left[ Z < \frac{d}{2} \, \land \, Y < w \right] \\
&= \Pr \left[ Z < \frac{d}{2}\right] \Pr \left[ Y < w \right]
\end{align*}

The second equality stems from the fact that the noise on runs is independent across runs and the nature of each run is chosen uniformly at random. We start by considering the effect on computation runs. If a computation run is affected by a given noise, then there is at least one type of test run that would have been affected (triggering traps) by the same noise. Since $\pmax$ is an upper-bound on the probability that any type of test run fails, then it is also an upper-bound on the probability that the outcome of the computation is incorrect. Let $\hat{Z}_1$ be a random variable following a $(d, \pmax)$-binomial distribution. Since we suppose that the noise is not correlated across runs (meaning that the probabilities runs fail on noisy devices are independent), $Z$ is upper-bounded by $\hat{Z}_1$ in the \emph{usual stochastic order}, which in this case gives:

\begin{align*}
\Pr \left[ Z < \frac{d}{2} \right] \geq &\Pr \left[ \hat{Z}_1 < \frac{d}{2} \right] = 1 - \Pr \left[ \hat{Z}_1 > \frac{d}{2} \right]
\end{align*}

Since $\pmax < \frac{1}{2}$ then $E\left (\hat{Z}_1 \right) = d\pmax < \frac{d}{2}$ and Lemma~\ref{lemma:hoeffding_binomial} yields:

\begin{align*}
\Pr \left[ \hat{Z}_1 > \frac{d}{2} \right] &\leq \exp \left( -2 \frac{\left( d \pmax - \frac{d}{2} \right)^2}{d} \right)\\
&= \exp \left( -2\delta \left( \pmax - \frac{1}{2} \right)^2 n \right) = \epsilon_Z \in \operatorname{negl}(n)
\end{align*}

Then $\Pr [\Correct] = \Pr \left[ Z < \frac{d}{2} \right] \geq 1 - \epsilon_Z$.

We can now focus on the test runs. Note that $Y$ describes exactly the number of test rounds in which at least one trap measurement outcome is incorrect (by definition of a failed test run). The probability that a given test run fails is therefore upper-bounded by $\pmax$. Let $\hat{Y}_1$ be a random variable following a $(t, \pmax)$-binomial distribution. Since we suppose that the noise is not correlated across runs, $Y$ is upper-bounded by $\hat{Y}_1$ in the usual stochastic order:

\begin{align*}
\Pr \left[ Y < w \right] \geq &\Pr \left[ \hat{Y}_1 < w \right] = 1 - \Pr \left[ \hat{Y}_1 > w \right]
\end{align*}

Further, since $E\left(\hat{Y}_1 \right) = t\pmax < w$, applying Lemma~\ref{lemma:hoeffding_binomial} yields:
\begin{align*}
\Pr \left[ \hat{Y}_1 > w \right] \leq &\exp \left( -2 \frac{(t \pmax - w)^2}{t} \right)\\
&= \exp \left( -2 (\omega - \pmax)^2 \tau n \right) = \epsilon_{Y, 1}
\end{align*}

Then $\Pr [\ok] = \Pr \left[ Y < w \right] \geq 1 - \Pr \left[ \hat{Y}_1 > w \right] = 1 - \epsilon_{Y, 1}$.

Combining these inequalities gives:

\begin{align*}
\Pr \left[\Correct \land \ok \right] \geq (1 - \epsilon_Z)(1 - \epsilon_{Y, 1}) \geq 1 - (\epsilon_Z + \epsilon_{Y, 1})
\end{align*}

\paragraph{For $\omega < \pmin$.} In that case, we have that $w < t\pmin$. We show that the probability of accepting is upper-bounded by a negligible function. 
Let $\hat{Y}_2$ be a random variable following a $(t, \pmin)$-binomial distribution, $Y$ then is lower-bounded by $\hat{Y}_2$ in the usual stochastic order:

\begin{align*}
\Pr \left[ Y < w \right] \leq \Pr \left[ \hat{Y}_2 < w \right]
\end{align*}

Since $w < t\pmin$, using Lemma~\ref{lemma:hoeffding_binomial} directly and the same simplifications as above, we get:

\begin{align*}
\Pr \left[ \hat{Y}_2 < w \right] \leq \exp \left( -2 (\pmin - \omega)^2 \tau n \right) = \epsilon_{Y, 2}
\end{align*}

Therefore $\Pr \left[ \ok \right] \leq \epsilon_{Y, 2} \in \operatorname{negl}(n)$.

\end{proof}

\end{document}